\numberwithin{equation}{section} 
\numberwithin{figure}{section} 
  \theoremstyle{plain}
  \newtheorem*{thm*}{Theorem}
  \theoremstyle{plain}
  \newtheorem{thm}{Theorem}[section]
  \theoremstyle{definition}
  \newtheorem{defn}[thm]{Definition}
  \theoremstyle{plain}
  \newtheorem{lem}[thm]{Lemma}
  \theoremstyle{plain}
  \newtheorem{prop}[thm]{Proposition}
  \theoremstyle{remark}
  \newtheorem{rem}[thm]{Remark}
  \theoremstyle{remark}
  \newtheorem{ex}[thm]{Example}
  \theoremstyle{remark}
  \newtheorem*{acknowledgement*}{Acknowledgement}
\begin{document}

\title[Symmetries, Newtonoid vector fields and conservation laws in
  the Lagrangian $k$-symplectic formalism] {Symmetries, Newtonoid vector fields and conservation laws in the Lagrangian $k$-symplectic formalism}

\author[Bua]{Luc\'{\i}a Bua}
\address{Luc\'{\i}a Bua, Departamento de Xeometr\'{\i}a e Topolox\'{\i}a, Facultade de
  Matem\'aticas, Universidade de Santiago de Compostela, Santiago De
  Compostela 15782, Spain}

\author[Bucataru]{Ioan Bucataru}
\address{Ioan Bucataru, Faculty of Mathematics, Universitatea
  ''Alexandru Ioan Cuza'' din Ia\c si, Ia\c si, 700506, Romania}
\urladdr{http://www.math.uaic.ro/\textasciitilde{}bucataru/}

\author[Salgado]{Modesto Salgado}
\address{Modesto Salgado, Departamento de Xeometr\'{\i}a e Topolox\'{\i}a, Facultade de
  Matem\'aticas, Universidade de Santiago de Compostela, Santiago De
  Compostela 15782, Spain}
\urladdr{http://xtsunxet.usc.es/modesto/}

\date{\today}

\begin{abstract}
In this paper we study symmetries, Newtonoid vector
  fields, conservation laws, Noether's Theorem and its converse, in the
  framework of the $k$-symplectic formalism, using the
  Fr\"olicher-Nijenhuis formalism on the space of $k^1$-velocities of
  the configuration manifold.

For the case $k=1$, it is well known that Cartan symmetries induce and
are induced by constants of motions, and these results are known as
Noether's Theorem and its converse. For the case $k>1$, we
  provide a new proof for
  Noether's Theorem, which shows that, in the
  $k$-symplectic formalism, each Cartan symmetry induces a conservation law.
We prove that, under some assumptions, the converse of Noether's Theorem
is also true and we provide examples when this is not the case. We also study the relations between
dynamical symmetries, Newtonoid vector fields, Cartan
symmetries and conservation laws, showing when one of them will imply the
others. We use several examples of partial differential equations to
illustrate  when these concepts are related and when they are not.
\end{abstract}

\subjclass[2000]{70S05, 70S10, 53D05}

\keywords{Symmetries, Conservation laws, Noether's theorem,
Lagrangian field theories, $k$-symplectic manifolds}

\maketitle

\section{Introduction}

The $k$-symplectic formalism \cite{ fam} is the generalization to field
theories of the standard symplectic formalism in
Mechanics \cite{am, arn}, which is the geometric framework for
describing autonomous dynamical systems. A natural extension of this
formalism is the so-called $k$-cosymplectic formalism, \cite{mod1,mod2}, which is a generalization to field theories of the
cosymplectic formalism describing non-autonomous mechanical systems. One of the advantages of using these formalisms is
 that  only the tangent and cotangent bundles of the configuration
 manifold are required to develop them. Others papers related with the
 $k$-symplectic and $k$-cosymplectic formalism are
 \cite{gun,LMMS1,LMMS2, mrsv,Mart2,relations, rsv07}.  

The polysymplectic formalism developed by
Giachetta, Mangiarotti and Sardanashvily in
\cite{Sarda2}, which is based on a vector-valued form defined
on some associated fiber  bundle, is a different description of
classical field theories of first order than the
$k$-symplectic  formalism. See also \cite{Kana}, for other
considerations regarding this aspect. The soldering form on linear frames bundle is a polysymplectic form, and its study and
applications to field theory, constitute the $n$-symplectic geometry
developed by Norris in \cite{No2,No3,No4,No5}. 

Alternatively, one can derive the field equations by use of the
so-called multisymplectic formalism, which was developed by
Tulczyjew's school, \cite{Kijo,KijoSz,KijTul,Snia,Tulczy1}, and
independently by Garc\'{\i}a and P\'{e}rez-Rend\'{o}n \cite{GP1,GP2}
and Goldschmidt and Sternberg \cite{GS}. This approach was revised
by   Gotay et al. \cite{Go1,Go2,Go3,Gymmsy} and more recently by Cantrijn et al.
\cite{Cant1,Cant2}. The relationship of the $k$-symplectic formalism with the multisymplectic formalism  is studied in \cite{relations}.

The aim of this paper is to study Noether's Theorem
 for first-order classical field theories, using the Lagrangian $k$-symplectic formalism.
 This study was initialized in \cite{rsv07} where large part of the
 discussion is a generalization of  the results obtained for
 non-autonomous mechanical systems. See, in  particular \cite{LM-96}
 and references quoted therein. We introduce the set of Newtonoid vector fields and prove that
 any Cartan symmetry is a Newtonoid vector field. Furthermore,
we show that under some assumptions, Newtonoid vector
  fields are Cartan symmetries and they induce conservation
  laws. This result extends the work developed by
  Marmo and Mukunda in \cite{marmo86}. The study of symmetries in field theory,
using various geometric frameworks, has been done in \cite{BSF88,
  EMR-99,Gymmsy, LMS-2004, mrsv}.

The structure of the paper is as follows. In Section \ref{review}  we review the $k$-symplectic
  Lagrangian formalism, and hence the field
theoretic state space of velocities is introduced in Section \ref{geo-elem} as the Whitney
sum $T^1_kQ$ of $k$-copies of the tangent bundle $TQ$ of a manifold $Q$. This manifold
has a canonical $k$-tangent structure defined by  $k$ tensor fields
$(J^1, \ldots, J^k)$ of type $(1,1)$, see
\cite{mt1,mt2}. In the case $k=1$, $J^1$ is the
canonical tangent structure of the
tangent bundle $TQ$. The canonical $k$-tangent structure of $T^1_kQ$ is used
to construct the Poincar\'{e}-Cartan forms.

A particular type of second order partial differential equations,
which we call {\sc sopde}, are introduced in Section \ref{kvf}. They are a
generalization of {\sc sode's} (semisprays) found in Geometric Mechanics. The Lagrangian
formalism  is developed  in Section \ref{subel}.

In  Section \ref{nth} we discuss symmetries and conservation laws
 for Lagrangian functions on $T^1_kQ$. We prove Noether's Theorem \ref{thm:noether}, which
 shows that each Cartan symmetry induces a conservation law. We provide in Proposition \ref{convnt} some
conditions under which the converse of Noether's Theorem is true. Noether's Theorem \ref{thm:noether}
was proved previously in \cite{rsv07} using local coordinates. Here we present a direct
global proof using the Fr\"olicher-Nijenhuis formalism. For a modern
description of the Fr\"olicher-Nijenhuis formalism see \cite[\S 8]{KMS93}.
In Section \ref{subnewt} we introduce the set of Newtonoid vector
fields in the framework of $k$-symplectic formalism, extending the work of Marmo and Mukunda \cite{marmo86}
  for the case $k=1$. In Proposition \ref{prop:csn} we prove that Cartan
  symmetries are always Newtonoid vector fields. In Theorem
  \ref{thm:cscl} we show that, under some assumptions, Newtonoid
  vector fields are Cartan symmetries and hence they provide
  conservation laws.

\section{Review of Lagrangian $k$-symplectic formalism}\label{review}
In this section we briefly recall the Lagrangian $k$-symplectic
formalism. We refer the reader to \cite{fam,rsv07} for more details about this
formalism. We present first the geometric framework
  for this formalism, which is given by the tangent bundle of
  $k^1$-velocities $T^1_kQ$ of the configuration manifold $Q$, together with the
  canonical structures. For a Lagrangian on $T^1_kQ$, the
  geometric informations we need for the Lagrangian $k$-symplectic
  formalism are encoded in the Poincar\'e-Cartan forms. We discuss
  further systems of second order partial differential equations ({\sc
    sopde}) as well as their relations with Euler-Lagrange equations.

\subsection{Geometric framework}\label{geo-elem}

\subsubsection*{The tangent bundle of $k^1$-velocities of a manifold.
Canonical structures.}

  In this work we consider $Q$ a real, $n$-dimensional
  and $C^{\infty}$-smooth manifold. Throughout the paper, we assume that all
  objects are $C^{\infty}$-smooth where defined. Consider $(TQ, \tau,
  Q)$ the tangent bundle of the manifold $Q$. We denote by $C^{\infty}(Q)$ the ring of smooth
  functions on $Q$, and by $\mathfrak{X}(Q)$ the
  $C^{\infty}(Q)$-module of vector fields on $Q$.

Let us denote by $T^1_kQ$ the Whitney sum
$TQ\oplus\stackrel{k}{\dots}\oplus TQ$ of $k$ copies of $TQ$, with
projection $\tau_Q\colon T^1_kQ\to Q$. $T^1_kQ$ can be identified with
the manifold $J^1_0(\mathbb{R}^k,Q)$ of the $k^1$-velocities of $Q$;
that is,  $1$-jets of maps $\sigma\colon\mathbb{R}^k\to Q$, with
source at $0\in \mathbb{R}^k$. For this reason the manifold $T^1_kQ$
is called \emph{the tangent bundle of
$k^1$-velocities of $Q$}, see \cite{morimoto}. If $(q^i)$ are local
coordinates on $U \subseteq Q$, then
the induced local coordinates  in
$T^1_kU=\tau_Q^{-1}(U)$ are denoted by $(q^i ,
v_\alpha^i)$, $ 1\leq i\leq n$,
$1\leq \alpha \leq k$. Throughout this work we implicitly assume summation over repeated
covariant and contravariant latin indices $i, j, l, ... \in
\{1,...,n\}$, as well as summation over repeated greek indices $\alpha, \beta,
... \in \{1,...,k\}$.

The \emph{canonical $k$-tangent structure} on $T^1_kQ$, see \cite{fam},
is the family
$J=(J^1,\ldots, J^k)$ of $k$ tensor fields  of type $(1,1)$, which are
locally given by
\begin{equation}
\label{localJA} J^\alpha=\frac{\partial}{\partial v^i_\alpha} \otimes
d q^i, \quad \alpha \in \{1,...,k\}.
\end{equation}
 In the case
$k=1$, $J^1$ is the well-known canonical tangent structure of the
tangent bundle.

The \emph{Liouville vector field}, $\mathbb{C}\in\mathfrak{X}(T^1_kQ)$,
is the infinitesimal generator of the following flow
 $$
\psi\colon \mathbb{R}\times T^1_kQ\longrightarrow T^1_kQ, \quad
  \psi(s, (q, v_{1}, \ldots ,v_{k}))= (q, e^s v_{1},\ldots , e^s  v_{k}).
$$
In local coordinates, the Liouville vector field has the form
\begin{equation}\label{localC}
 \mathbb{C} = v^i_\alpha \frac{\partial}{\partial v^i_{\alpha}} .
\end{equation}

The \emph{vertical distribution} is the $kn$-dimensional distribution
on $T^1_kQ$ given by $V: u\in T^1_kQ \to V(u)=\operatorname{Ker}d_u\tau_Q
= \operatorname{Ker} J_u \subset
T_uT^1_kQ$. The vertical distribution $V$ splits into $k$
subdistributions $V^{\alpha}(u)=\operatorname{Im} J^{\alpha}(u)$,
$\alpha\in \{1,...,k\}$. Each of these vertical subdistributions are
$n$-dimensional and integrable since
$V^{\alpha}(u)=\textrm{span}\left\{\partial/\partial
  v^i_{\alpha},  1\leq i\leq n \right\}$. 

\subsubsection*{ Poincar\'{e}-Cartan  forms on $T^1_kQ$.} \label{pcforms}

The Lagrangian $k$-symplectic formalism for a Lagrangian function $L$ on $T^1_kQ$ can be developed from the corresponding Poincar\'e-Cartan forms.

\begin{defn}   A  \emph{Lagrangian} is a smooth function $L$ on
  $T^1_kQ$. A Lagrangian $L \in C^{\infty}(T^1_kQ)$ is called
  \emph{regular} if the Hessian matrix of $L$ with respect to the fibre
  coordinates,
\begin{eqnarray}
g^{\alpha\beta}_{ij}(q, v)=\frac{\partial^2 L}{\partial
  v^i_{\alpha} \partial v^j_{\beta}}(q,v) \label{gabij} \end{eqnarray} has
maximal rank $kn$ on $T^1_kQ$.

For a Lagrangian $L$, the \emph{energy function} is
$E_L=\mathbb{C}(L)-L\in C^\infty(T^1_kQ)$, with local expression
\begin{eqnarray} E_L=v^i_\alpha\frac{\partial L}{\partial v^i_\alpha}-L .
\label{energyL} \end{eqnarray}
\end{defn}

For each Lagrangian $L\in C^{\infty}(T^1_kQ)$ we
consider the family of \emph{Poincar\'{e}-Cartan} $1$-forms
$\theta_L^\alpha=d_{J^{\alpha}}L= dL \circ J^{\alpha}, $ as well as
the family of Poincar\'{e}-Cartan $2$-forms on $T^1_kQ$,
$\omega_L^\alpha=-d\theta_L^\alpha$.

In induced local coordinates on $T^1_kQ$, the Poincar\'{e}-Cartan
forms are given by \begin{eqnarray}
\label{thetala} \theta_L^\alpha &=& \frac{\partial L}{\partial
v^i_\alpha} dq^i,
 \\
\nonumber\omega_L^\alpha &=& \frac{1}{2} \left(\frac{\partial ^2
L}{\partial q^j\partial v^i_\alpha} - \frac{\partial ^2 L}{\partial
q^i\partial v^j_\alpha} \right) dq^i\wedge dq^j + \frac{\partial ^2
L}{\partial v^j_\beta\partial v^i_\alpha} dq^i\wedge d v^j_{\beta}.
\end{eqnarray}

We recall now the definition of a $k$-symplectic structure, see \cite{aw1, aw3}.
\begin{defn} \label{definition:ksympecticl}
A \emph{$k$-symplectic} structure on a  $k+nk$-dimensional manifold
$M$  is given by a family of $k$, closed $2$-forms $(\omega^1,...,
\omega^k)$ and an integrable $kn$-dimensional distribution $V$    on $M$ such
that $$ i) \ \cap_{\alpha=1}^k
  \operatorname{Ker}(\omega^{\alpha})=\{0\}, \quad ii) \
  \omega^{\alpha}_{\vert_{ V\times V}}=0, \quad  \alpha\in
  \{1,...,k\}\, . $$
\end{defn}

Using formulae   \eqref{gabij} and \eqref{thetala} one obtains
that a Lagrangian $L$ is regular if and only if the
Poincar\'e-Cartan $2$-forms and the vertical
  distribution, $(\omega_L^1, \ldots, \omega_L^k, V= \ker \tau_Q)$,
  define a $k$-symplectic structure on $T^1_kQ$, see \cite{fam}.

\subsubsection*{Complete lifts of vector fields.} The
  lifting process of some geometric structures from a base manifold to
  the total space of some fibre bundle has proven its usefulness for
  studying the corresponding geometric structures \cite{morimoto, yano73}. For example,
  the complete lift of a system of second order ordinary differential
  equations contains informations about its symmetries and first order
  variations, \cite{bd10}.
\begin{defn} Let $\phi\colon Q\to Q$ be a differentiable map, then the
\emph{first order prolongation} of $\phi$ to $T^1_kQ$ is the
map $T^1_k\phi:T^1_kQ \to  T^1_kQ$, defined by
$T^1_k\phi(j^1_0\sigma)=j^1_0(\phi \circ \sigma)$. This means that
for $(q, {v_1}, \ldots, {v_k})\in T_qQ$, $q\in Q$, we have
$$
  T^1_k\phi(q, {v_1},\ldots, {v_k})=
(d_q\phi (v_1), \ldots,d_q\phi (v_k)).
$$
\end{defn}

If $Z$ is a vector field on $Q$, with local $1$-parametric group of
transformations $h_s\colon Q \to Q$, then the local $1$-parametric
group of transformations $T^1_kh_s\colon T^1_kQ \to T^1_kQ$
generates a vector field $Z^C$ on $T^1_kQ$, which is called the
\emph{complete lift} of $Z$ to $ T^1_kQ$. If locally $Z=Z^i{\partial}/{\partial q^i}$, then the complete lift is given by
\begin{eqnarray*} \label{clift}
Z^C=Z^i\frac{\partial}{\partial q^i} + v^i_\alpha\frac{\partial Z^j}
{\partial q^i} \frac{\partial}{\partial v_\alpha^j}.
\end{eqnarray*}
 If we consider also the vertical lifts $Z^{V_{\alpha}}=J^{\alpha}Z^C$,
then the following properties  are well known, see \cite{morimoto},
$$
[X^C,Y^C]= [X,Y]^C, \quad [X^C,Y^{V_\alpha}]= [X,Y]^{V_\alpha},
\quad  [X^{V_\alpha},X^{V_\beta}]=0.
$$ These formulae extend the well known properties of Lie brackets
for vertical and complete lifts of vector fields to $TQ$,
\cite{yano73}.

\subsection{Systems of first and second-order partial differential
  equations}\label{kvf}

A vector field on a manifold $M$ defines a system of first-order ordinary differential equations. Accordingly, a $k$-vector field on
$M$, for some $k>1$, defines  a system of first-order partial differential equations. Furthermore, some special $k$-vector field on the manifold
$M=T^1_kQ$ defines a system of second-order partial differential equations.

\subsubsection*{First-order partial differential equations on a manifold}

In this subsection, we briefly show how $k$-vector
fields determine systems of first-order partial differential
equations. 

\begin{defn}
 A  \emph{$k$-vector field} on an arbitrary manifold $M$ is a section
  $X: M \longrightarrow T^1_kM$
of the canonical projection $\tau_M:T^1_kM \rightarrow M$.
\end{defn}

Since $T^{1}_{k}M$ is  the Whitney sum $TM\oplus \stackrel{k}{\dots}
\oplus TM$ of $k$ copies of $TM$, we deduce that a $k$-vector field
$X$ defines a family of $k$ vector fields $X_{1}, \dots,
X_{k}\in\mathfrak{X}(M)$ by projecting $X$ onto every factor; that
is, $X_\alpha=\tau_{\alpha}\circ X$, where $\tau_{\alpha}\colon
T^1_kM \rightarrow TM$ is the canonical projection on the
$\alpha^{th}$-copy $TM$ of $T^1_kM$.

\begin{defn}\label{integsect}
An \emph{integral section}  of the $k$-vector field $X=(X_{1},
\dots, X_{k})$, passing through a point $x\in M$, is a map
$\psi\colon U\subset {\mathbb R}^{k} \rightarrow M$, defined on some
open neighborhood  $U$ of $0\in {\mathbb R}^{k}$, such that
$$
\psi(0)=x, \ d_t\psi\left(\frac{\partial}{\partial
t^\alpha}\Big\vert_t\right)=X_{\alpha}(\psi (t))\in T_{\psi(t)}M,
\quad \rm{for \ every\ } t\in U, \ 1\leq \alpha\leq k.
$$
Equivalently, $  \psi  $ satisfies $X\circ \psi =
\psi^{(1)}$, where $\psi^{(1)}: U\subset {\mathbb
  R}^{k} \longrightarrow  T^1_kM$    is the
first-order prolongation of $  \psi$ to $T^1_kM$ defined by
\begin{equation}
\begin{array}{rcl}
\psi^{(1)}: U\subset {\mathbb
  R}^{k} & \rightarrow  & T^1_kM \\ \noalign{\medskip}
       t & \rightarrow  & \psi^{(1)}(t)=
       \left(d_t\psi\left(\frac{\partial}{\partial t^1}\Big\vert_t\right),\ldots,
d_t\psi\left(\frac{\partial}{\partial t^k}\Big\vert_t\right)\right).
\end{array}
\end{equation}

In local coordinates, if $\psi(t)=(\psi^i(t))$, then
we have
\begin{equation}\label{localfi11}
\psi^{(1)}(t)= \left( \psi^i (t),
  \frac{\partial\psi^i}{\partial t^\alpha} (t)\right).
\end{equation}
   A $k$-vector field $ X=(X_1,\ldots , X_k)$ on $M$ is
\emph{integrable} if there is an integral section passing through
every point of $M$.
\end{defn}

Consider $X=(X_{1}, \dots, X_{k})$ a $k$-vector field, where
$X_\alpha=X^i_{\alpha} {\partial }/{\partial x^i}$ in a  coordinate
system $(U,x^i)$ on  $M$. The $k$-vector field $X$ induces a system of first-order
partial differential equations on $M$, which is given by
$$
X_\alpha^i(x^j(t))= \frac{\partial x^i}{\partial t^\alpha}\Big\vert_t,
\quad   \alpha \in \{1,..., k\}, \quad  i\in \{1,..., \dim M\}.
$$
From Definition \ref{integsect} we deduce that $\psi$ is an integral section of
$X=(X_{1}, \dots, X_{k})$
if $\psi$ is a solution to the above system of first-order
partial differential equations, which means that it satisfies
$$
X_\alpha^i(\psi(t))= \frac{\partial \psi^i}{\partial t^\alpha}\Big\vert_t,
\quad   \alpha \in \{1,..., k\},  \quad i\in \{1,..., \dim M\}.
$$

\subsubsection*{Systems of second-order partial differential
equations}\label{sopde}

In this part we characterize those integrable
$k$-vector fields on $M=T^1_kQ$ that have as integral sections first
order prolongations $\phi^{(1)}$ of maps $\phi: U\subset {\mathbb
R}^{k}\to Q$. Such $k$-vector fields define integrable systems of
second-order partial differential equations on the base manifold
$Q$.

As we recalled, a $k$-vector field in $T^1_kQ$ is a section
$\mathbf{\xi}\colon T^1_kQ\longrightarrow T^1_k(T^1_kQ)$ of the
canonical projection $\tau_{T^1_kQ}\colon T^1_k(T^1_kQ)\to
T^1_kQ$. We note that there are systems of partial
  differential equations that are not induced by $k$-vector
  fields. However, in this work we are interested only in those systems
  of PDE that are induced by such $k$-vector fields. In view of these
  considerations, we consider the following definition.
\begin{defn}
\label{sode0} A system of \emph{second-order partial differential equations}
({\sc sopde}) on $Q$ is a $k$-vector field
$\xi=(\xi_1,\ldots,\xi_k)$ on $T^1_kQ$, which is
a section of the projection $T^1_k\tau_Q\colon
T^1_k(T^1_kQ)\rightarrow T^1_kQ$, namely
$ T^1_k\tau_Q\circ\mathbf{\xi}={\rm Id}_{T^1_kQ}, $
and this is equivalent to
\begin{equation}\label{propsopde}
d\tau_Q\circ \xi_\alpha=\tau_{\alpha}:T^1_kQ \to TQ, \quad  \alpha \in \{1,..., k\}.
\end{equation}
\end{defn}
Equivalently, above equations can be written as follows
$$
d_{(q,v)}\tau_Q(\xi_\alpha(q,v))= (q,v_{\alpha}),\quad {\rm for \ }
(q, v_1,..., v_k)\in T^1_kQ, \quad \alpha \in \{1,\ldots, k\}.
$$
In the case $k=1$, Definition \ref{sode0} reduces to the definition of a system of second-order
ordinary differential equations ({\sc sode}).

Locally, a {\sc sopde} $\mathbf{\xi}=(\xi_1,\ldots,\xi_k) $ is given
by
\begin{equation}
\label{localsode1} \xi_\alpha= v^i_\alpha\frac{\partial}
{\partial q^i}+ \xi^i_{\alpha \beta} \frac{\partial} {\partial
v^i_\beta},\quad  \alpha \in \{1,..., k\},
\end{equation}
where $\xi^i_{\alpha \beta}$ are smooth functions defined on domains
of induced charts on $T^1_kQ$.

All these considerations allow us to reformulate the definition for a
 {\sc sopde}, using the $k$-tangent structure   and the Liouville vector field
 $\mathbb{C}$ (see formulae \eqref{localJA} and \eqref{localC}), as follows.

 \begin{prop}\label{xijso}
 A $k$-vector field
$\xi=(\xi_1,\ldots,\xi_k)$ on $T^1_kQ$ is a  {\sc sopde} if and only
if $J^{\alpha}(\xi_{\alpha})=\mathbb{C}$.
 \end{prop}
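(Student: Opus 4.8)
The plan is to prove both implications at once by a short computation in induced coordinates, using the local expressions \eqref{localJA} for $J^\alpha$, \eqref{localC} for $\mathbb{C}$, and \eqref{localsode1} for a {\sc sopde}, and afterwards to record the coordinate-free meaning of the identity.

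First I would fix induced coordinates $(q^i,v^i_\alpha)$ on $T^1_kU$ and write a general $k$-vector field on $T^1_kQ$ as $\xi_\alpha = a^i_\alpha\,\partial/\partial q^i + b^i_{\alpha\beta}\,\partial/\partial v^i_\beta$, with $a^i_\alpha,b^i_{\alpha\beta}\in C^{\infty}(T^1_kU)$. Since $J^\alpha=(\partial/\partial v^i_\alpha)\otimes dq^i$ kills the fibre directions $\partial/\partial v^j_\beta$ and sends $\partial/\partial q^j$ to $\partial/\partial v^j_\alpha$, applying each $J^\alpha$ to $\xi_\alpha$ and summing over $\alpha$ as the statement dictates yields
$$
J^\alpha(\xi_\alpha)=\sum_{\alpha} a^i_\alpha\,\frac{\partial}{\partial v^i_\alpha}.
$$
Comparing with \eqref{localC}, the condition $J^\alpha(\xi_\alpha)=\mathbb{C}$ becomes $\sum_\alpha (a^i_\alpha-v^i_\alpha)\,\partial/\partial v^i_\alpha=0$.

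The only point needing a word is that this vanishing forces $a^i_\alpha=v^i_\alpha$ for every pair $(i,\alpha)$ separately, and not just some averaged relation; this is exactly the pointwise linear independence of the $kn$ fields $\partial/\partial v^i_\alpha$, equivalently the direct-sum decomposition $V=V^1\oplus\cdots\oplus V^k$ with $V^\alpha=\operatorname{Im}J^\alpha$ recalled earlier. Granting this, $J^\alpha(\xi_\alpha)=\mathbb{C}$ holds if and only if each $\xi_\alpha$ has the form \eqref{localsode1}, which by \eqref{propsopde} together with \eqref{localsode1} is precisely the condition that $\xi$ be a {\sc sopde}; this establishes both directions.

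I do not expect any genuine obstacle here: the argument is purely computational, the only subtlety being the summation bookkeeping and the decoupling just mentioned. I would close by recording the intrinsic reading: $J^\alpha(\xi_\alpha)$ is the $\alpha$-th vertical lift of $d\tau_Q(\xi_\alpha)$ at the base point, while $\mathbb{C}_{(q,v_1,\ldots,v_k)}=\sum_\alpha (v_\alpha)^{V_\alpha}$; since each vertical lift is an isomorphism onto $V^\alpha$ and $V=\bigoplus_\alpha V^\alpha$, the equality of these two sums decouples into $d\tau_Q(\xi_\alpha)=\tau_\alpha$ for each $\alpha$, which is \eqref{propsopde}.
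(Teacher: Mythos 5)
Your proof is correct and is essentially the computation the paper has in mind: the paper states Proposition \ref{xijso} without proof, presenting it as an immediate consequence of the local formulae \eqref{localJA}, \eqref{localC} and \eqref{localsode1}, which is exactly what you carry out. Your explicit remark that the summed equation decouples because $V=\bigoplus_\alpha V^\alpha$ is the one point worth making explicit, and you make it correctly.
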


If $\psi: U\subset {\mathbb R}^{k} \to T^1_kQ$,  locally given by
$\psi(t)=(\psi^i(t),\psi_\alpha^i(t))$, is an integral section of a {\sc sopde}
$\mathbf{\xi}=(\xi_1,\ldots,\xi_k)$ then from Definition
\ref{integsect} and formula \eqref{localsode1} it follows
\begin{equation}
\label{solsopde}
 \frac{\partial\psi^i} {\partial t^\alpha}\Big\vert_t=\psi^i_\alpha(t),\quad
\frac{\partial\psi^i_\alpha} {\partial
t^\beta}\Big\vert_t=\xi^i_{\alpha \beta}(\psi(t)).
\end{equation}
Using formulae \eqref{localfi11} and \eqref{solsopde} we obtain the
following characterization for the integral maps of a {\sc sopde}.
\begin{prop}
Let $\xi=(\xi_1,\ldots,\xi_k)$ be an integrable
{\sc sopde}. If $\psi$ is an integral section of $\xi$,
then $\psi=\phi^{(1)}$, where $\phi^{(1)}$ is the first-order
prolongation of the map
$$\phi=\tau_Q\circ\psi: U\subset {\mathbb R}^{k}\stackrel{\psi}{\to}T^1_kQ
\stackrel{\tau_Q}{\to}Q,$$
and $\phi$ is a solution to the system of second-order partial
differential equations
\begin{equation}
\label{nn1}
 \frac{\partial^2 \phi^i}{\partial t^\alpha\partial t^\beta}(t)=
\xi^i_{\alpha \beta}\left(\phi^j(t), \frac{\partial\phi^j}{\partial
t^\gamma}(t)\right).
\end{equation}
Conversely, if $\phi: U\subset {\mathbb R}^{k} \to Q$ is any map satisfying
the system \eqref{nn1}, then $\phi^{(1)}$ is an integral section of
$\mathbf{\xi}=(\xi_1,\ldots,\xi_k)$.
\end{prop}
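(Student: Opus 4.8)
The plan is to carry out the whole argument in induced local coordinates on $T^1_kQ$ and then note that the identifications made are intrinsic. The two facts already available are the local form \eqref{localsode1} of a {\sc sopde} (equivalently, the characterization $J^\alpha(\xi_\alpha)=\mathbb{C}$ of Proposition \ref{xijso}) and the expression \eqref{localfi11} for a first-order prolongation.

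For the direct implication I would start from an integral section $\psi\colon U\subset\mathbb{R}^k\to T^1_kQ$ of $\xi$, written as $\psi(t)=(\psi^i(t),\psi^i_\alpha(t))$. Since $\xi$ is a {\sc sopde}, its components have the form \eqref{localsode1}, so applying Definition \ref{integsect} to $\psi$ gives exactly the two families of relations \eqref{solsopde}. The first family, $\partial\psi^i/\partial t^\alpha=\psi^i_\alpha$, compared with \eqref{localfi11}, says precisely that $\psi=\phi^{(1)}$, where $\phi:=\tau_Q\circ\psi$ has components $\phi^i=\psi^i$; this is coordinate free because $\tau_Q\circ\psi$ and the prolongation operation are intrinsically defined, so the local conclusion is global. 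Substituting $\psi^i_\alpha=\partial\phi^i/\partial t^\alpha$ into the second family of \eqref{solsopde} then turns it into \eqref{nn1}, so $\phi$ solves the second-order system.

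For the converse I would take $\phi\colon U\subset\mathbb{R}^k\to Q$ satisfying \eqref{nn1}, set $\psi:=\phi^{(1)}$ so that by \eqref{localfi11} its coordinates are $\psi^i=\phi^i$ and $\psi^i_\alpha=\partial\phi^i/\partial t^\alpha$, and check the two families \eqref{solsopde}: the first is the definition of $\phi^{(1)}$, the second is \eqref{nn1} rewritten using $\partial\phi^i/\partial t^\alpha=\psi^i_\alpha$. By Definition \ref{integsect} together with \eqref{localsode1}, these relations amount to $\xi\circ\psi=\psi^{(1)}$, i.e. $\psi=\phi^{(1)}$ is an integral section of $\xi$. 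I do not expect a genuine obstacle here; the only point worth stating explicitly is that the pointwise coordinate computation is compatible with the intrinsic meanings of $\tau_Q\circ\psi$ and of $\phi^{(1)}$, so that the argument is global, while the integrability hypothesis on $\xi$ serves only to guarantee that integral sections exist and is not otherwise used in the proof.
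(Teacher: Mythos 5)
Your proposal is correct and follows essentially the same route as the paper, which derives the relations \eqref{solsopde} from Definition \ref{integsect} applied to the local form \eqref{localsode1} and then reads off the proposition by comparison with \eqref{localfi11}. Your extra remarks on the intrinsic nature of $\tau_Q\circ\psi$ and of the prolongation, and on the role of the integrability hypothesis, are accurate but not needed beyond what the paper already records.
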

\begin{defn}
If $\phi^{(1)}$ is an integral section of a {\sc sopde}
$\xi=(\xi_1,\ldots,\xi_k)$, then map $\phi$ will be called a \emph{solution} to $\xi$.
\end{defn}

From equations \eqref{nn1} we deduce that if $\xi$ is an
integrable {\sc sopde} then necessarily we have the symmetry
$\xi^i_{\alpha \beta}=\xi^i_{ \beta\alpha}$ for all $\alpha ,\beta=1,\ldots, k$.
Therefore, for a {\sc sopde} $\xi$, locally given by  formula \eqref{localsode1},
 we require the following integrability conditions \cite{kosambi48}:
\begin{eqnarray} \xi^i_{\alpha\beta}=\xi^i_{\beta\alpha}, \quad
\xi_{\alpha}(\xi^i_{\beta\gamma})=\xi_{\beta}(\xi^i_{\alpha\gamma}), \forall
\alpha, \beta, \gamma \in\{1,...,k\}. \label{symcond} \end{eqnarray}
Integrability conditions \eqref{symcond} are equivalent to the fact
that $[\xi_{\alpha}, \xi_{\beta}]=0$, $\forall
\alpha, \beta \in\{1,...,k\}$. The integrability conditions \eqref{symcond} have been also proved in \cite{Maranon}.  Due to the first symmetry condition
\eqref{symcond} we have that the system \eqref{nn1}
is a system of $nk(k+1)/2$ second-order partial differential equations.

\subsection{Euler-Lagrange equations}\label{subel}

An important class of {\sc sopde} on a manifold $Q$ contains those whose
  solutions are among the solutions of the Euler-Lagrange equations
  for some Lagrangian function on $T^1_kQ$. In Proposition
  \ref{prop:elsopde} we characterize this class, while in Proposition
  \ref{known2} we discuss the relation between the solutions of a {\sc
  sopde} in this class and the solutions of the corresponding
Euler-Lagrange equations.

The variational problem for a Lagrangian $L$ on $T^1_kQ$ leads to the
following system of Euler-Lagrange equations
\begin{eqnarray}
\frac{\partial}{\partial t^{\alpha}}\left(\frac{\partial
    L}{\partial v^i_{\alpha}}\right) - \frac{\partial L}{\partial
  q^i}=0.
\label{eqn:eulerlagrange} \end{eqnarray}
Euler-Lagrange equations \eqref{eqn:eulerlagrange} can be written as
\begin{eqnarray}
g^{\alpha\beta}_{ij} \frac{\partial^2 q^j}{\partial t^{\alpha}\partial
  t^{\beta}} + \frac{\partial^2L}{\partial q^j \partial v^i_{\alpha}} v^j_{\alpha}
- \frac{\partial L}{\partial q^i}=0,  \label{eqn:eulerlagrange2} \end{eqnarray}
which represents a system of $n$ second-order partial differential
equations on $Q$.

Denote by $\mathfrak{X}^k_L(T^1_kQ)$ the set of $k$-vector fields
$\xi=(\xi_1,\dots,\xi_k)$ on $T^1_kQ$, which are solutions
to the equation
 \begin{equation}
\label{genericEL} i_{\xi_{\alpha}} \omega_L^\alpha=d E_L.
 \end{equation}

If each  $\xi_{\alpha}$ is locally given by
\begin{equation}
\label{localsode2} \xi_\alpha= \xi^i_\alpha\frac{\partial}
{\partial q^i}+ \xi^i_{\alpha \beta} \frac{\partial} {\partial
v^i_\beta},\quad  \alpha \in \{1,..., k\},
\end{equation}
then $(\xi_1,\dots,\xi_k)$ is a solution to \eqref{genericEL}
if and only if the functions  $ \xi^i_{\alpha}$ and $\xi^i_{\alpha\beta}$
satisfy the following system of equations
\begin{eqnarray}\label{ellnr}
  \left( \frac{\partial^2 L}{\partial q^i \partial v^j_{\alpha}} -
 \frac{\partial^2 L}{\partial q^j \partial v^i_{\alpha}}
\right) \xi_{\alpha}^j - \frac{\partial^2 L}{\partial v_\alpha^i
\partial v^j_{\beta}}  \xi_{\alpha\beta}^j &=&
v_\alpha^j \frac{\partial^2 L}{\partial q^i
\partial v^j_{\alpha}} -\frac{\partial  L}{\partial q^i }
\\ \nonumber \frac{\partial^2 L}{\partial v^j_{\beta}\partial v^i_{\alpha}}  \xi_{\alpha}^i
&=& \frac{\partial^2 L}{\partial v^j_{\beta}\partial v^i_{\alpha}}  v_\alpha^i.
\end{eqnarray}
If $L$ is a regular Lagrangian, the above equations are equivalent to the following  equations
\begin{equation}\label{locel4}
\frac{\partial^2 L}{\partial v_\alpha^i \partial v^j_{\beta}}
\xi_{\alpha\beta}^j  + \frac{\partial^2 L}{\partial q^j \partial v^i_{\alpha}} v^j_{\alpha}
- \frac{\partial  L}{\partial q^i}=0,\quad \xi^i_{\alpha}= v^i_{\alpha}.
\end{equation}

Using equations \eqref{ellnr} we deduce the following lemma.
\begin{lem}\label{know1} Consider $L\in
    C^{\infty}(T^1_kQ)$ a Lagrangian.
\begin{enumerate}
\item[1)]  If $L$ is regular,  then any $k$-vector field $\xi\in
\mathfrak{X}^k_L(T^1_kQ)$ is a {\sc sopde}, it is locally given by formula
\eqref{localsode1} and satisfies equations \eqref{locel4}.

\item[2)] If $\xi \in \mathfrak{X}^k_L(T^1_kQ)$ and $\xi$ is a {\sc sopde}, then
it is locally given by formula \eqref{localsode1} and satisfies equations \eqref{locel4}.
\end{enumerate}
\end{lem}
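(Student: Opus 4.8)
The plan is to work entirely in the induced local coordinates $(q^i,v^i_\alpha)$ on $T^1_kQ$ and to translate the global equation \eqref{genericEL} into the coordinate system \eqref{ellnr}, whose derivation is already available from the excerpt. The key algebraic input is that \eqref{ellnr} is precisely the pair of equations that a candidate solution $\xi_\alpha = \xi^i_\alpha\,\partial/\partial q^i + \xi^i_{\alpha\beta}\,\partial/\partial v^i_\beta$ must satisfy, where for item~1) we do not yet know that $\xi^i_\alpha = v^i_\alpha$ (i.e.\ we do not yet know $\xi$ is a {\sc sopde}), while for item~2) that equality is assumed. So the whole lemma reduces to a reading of the second equation in \eqref{ellnr} together with Proposition~\ref{xijso}.

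First I would prove item~1). Assume $L$ is regular, so the Hessian $g^{\alpha\beta}_{ij} = \partial^2 L/\partial v^i_\alpha\partial v^j_\beta$ has maximal rank $kn$. Take $\xi = (\xi_1,\dots,\xi_k)\in\mathfrak{X}^k_L(T^1_kQ)$, written locally as in \eqref{localsode2}. The second equation in \eqref{ellnr} reads $g^{\alpha\beta}_{ij}\,\xi^i_\alpha = g^{\alpha\beta}_{ij}\,v^i_\alpha$ for all $j,\beta$, that is $g^{\alpha\beta}_{ij}\,(\xi^i_\alpha - v^i_\alpha) = 0$. Here there is a subtlety: one must read this as the contraction, over $i$ and $\alpha$, of the vector $w^i_\alpha := \xi^i_\alpha - v^i_\alpha$ against the full Hessian, which is a linear map from the $kn$-dimensional fibre-velocity space to its dual; since that map is invertible by regularity, $w^i_\alpha = 0$, hence $\xi^i_\alpha = v^i_\alpha$ for every $i,\alpha$. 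Comparing with \eqref{localsode1} this says exactly that $\xi$ is locally of {\sc sopde} form; equivalently $J^\alpha(\xi_\alpha) = v^i_\alpha\,\partial/\partial v^i_\alpha = \mathbb{C}$, so by Proposition~\ref{xijso} $\xi$ is a {\sc sopde}. Finally, substituting $\xi^i_\alpha = v^i_\alpha$ into the first equation of \eqref{ellnr} collapses the antisymmetric bracket term (it contracts $(\partial^2 L/\partial q^i\partial v^j_\alpha - \partial^2 L/\partial q^j\partial v^i_\alpha)$ with $v^j_\alpha$, which cancels the right-hand side term $v^j_\alpha\,\partial^2L/\partial q^i\partial v^j_\alpha$ up to the residual $-\partial L/\partial q^i$), yielding precisely \eqref{locel4}. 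This is the content claimed.

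For item~2), we are given that $\xi\in\mathfrak{X}^k_L(T^1_kQ)$ \emph{and} that $\xi$ is already a {\sc sopde}, but $L$ is not assumed regular. Being a {\sc sopde} forces $\xi^i_\alpha = v^i_\alpha$ directly from Definition~\ref{sode0}/equation \eqref{localsode1}, so $\xi$ is locally given by \eqref{localsode1} with no use of regularity. Then the second equation of \eqref{ellnr} is satisfied automatically, and, exactly as above, substituting $\xi^i_\alpha = v^i_\alpha$ into the first equation of \eqref{ellnr} reduces it to \eqref{locel4}. Thus the conclusion follows without invoking invertibility of the Hessian.

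The only genuine obstacle is the linear-algebra step in item~1): one must be careful that the index contraction $g^{\alpha\beta}_{ij}(\xi^i_\alpha - v^i_\alpha) = 0$, holding for all $j$ and $\beta$, really does let us cancel the full vector $(\xi^i_\alpha - v^i_\alpha)_{i,\alpha}$ — i.e.\ that ``maximal rank $kn$'' of the Hessian is interpreted as nondegeneracy of the symmetric bilinear form on the $kn$-dimensional space indexed by pairs $(i,\alpha)$, not merely rank conditions on each $n\times n$ block $g^{\alpha\beta}$ separately. Once this is set up correctly, everything else is bookkeeping with the local formulae \eqref{ellnr}, \eqref{localsode1}, \eqref{locel4} and Proposition~\ref{xijso}.
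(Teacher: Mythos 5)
Your proposal is correct and follows essentially the same route as the paper, which states the lemma as an immediate consequence of the local equations \eqref{ellnr}: regularity of the $kn\times kn$ Hessian forces $\xi^i_\alpha=v^i_\alpha$ from the second equation (or the {\sc sopde} hypothesis supplies it directly in case 2), and substituting into the first equation cancels the skew term and yields \eqref{locel4}. Your explicit attention to reading the rank condition on the full $kn$-dimensional block matrix, rather than on the individual $n\times n$ blocks, is exactly the right reading of the paper's regularity hypothesis \eqref{gabij}.
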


Next proposition characterizes the set of {\sc sopde}s that are in $\mathfrak{X}^k_L(T^1_kQ)$.

\begin{prop} \label{prop:elsopde}
Let $L\in C^\infty(T^1_kQ) $ be  a Lagrangian and let
$\xi=(\xi_1,\dots,\xi_k)$  be a {\sc sopde} on $T^1_kQ$. Then
 $\xi \in
\mathfrak{X}^k_L(T^1_kQ)$ if and only if it satisfies  the
following   condition:
\begin{eqnarray}
\label{eq}  \mathcal{L}_{\xi_\alpha} \theta^\alpha_L = dL, \quad \mbox{or locally} \quad
\xi_\alpha\left(\frac{\partial L}{\partial v^i_\alpha}\right)=
\frac{\partial L}{\partial q^i}.
\end{eqnarray}
\end{prop}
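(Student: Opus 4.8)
The plan is to prove the equivalence by reducing both sides to the local equations \eqref{locel4}, using Lemma \ref{know1} for one direction and a direct Lie-derivative computation for the other. First I would observe that since $\xi$ is assumed to be a {\sc sopde}, each $\xi_\alpha$ has the local form \eqref{localsode1}, i.e.\ $\xi_\alpha = v^i_\alpha \partial/\partial q^i + \xi^i_{\alpha\beta} \partial/\partial v^i_\beta$ with $\xi^i_\alpha = v^i_\alpha$ automatically. So the second system of equations in \eqref{ellnr} is already satisfied, and the content of $\xi \in \mathfrak{X}^k_L(T^1_kQ)$ collapses (via the argument in Lemma \ref{know1}) to the first system of \eqref{ellnr}, which for a {\sc sopde} reads
\begin{equation*}
\frac{\partial^2 L}{\partial v^i_\alpha \partial v^j_\beta}\,\xi^j_{\alpha\beta} + \frac{\partial^2 L}{\partial q^j \partial v^i_\alpha}\,v^j_\alpha - \frac{\partial L}{\partial q^i} = 0.
\end{equation*}
(Here one must be careful: the derivation of this from $i_{\xi_\alpha}\omega^\alpha_L = dE_L$ does not require regularity once $\xi$ is known to be a {\sc sopde}, since the non-regular part of \eqref{ellnr} only involves $\xi^j_\alpha - v^j_\alpha$, which vanishes.)

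Next I would compute $\mathcal{L}_{\xi_\alpha}\theta^\alpha_L$ directly in coordinates. Using $\theta^\alpha_L = (\partial L/\partial v^i_\alpha)\,dq^i$ (no sum on $\alpha$ inside a single term, but the statement \eqref{eq} does sum on $\alpha$), Cartan's formula gives
\begin{equation*}
\mathcal{L}_{\xi_\alpha}\theta^\alpha_L = i_{\xi_\alpha}\,d\theta^\alpha_L + d\,i_{\xi_\alpha}\theta^\alpha_L = -\,i_{\xi_\alpha}\omega^\alpha_L + d\!\left(\xi_\alpha\!\left(\tfrac{\partial L}{\partial v^i_\alpha}\right)\! ??\right),
\end{equation*}
so the cleanest route is: $\mathcal{L}_{\xi_\alpha}\theta^\alpha_L - dL = \big(\xi_\alpha(\partial L/\partial v^i_\alpha) - \partial L/\partial q^i\big)\,dq^i + (\text{terms in } dv^i_\beta)$, and then check that the $dv^i_\beta$-coefficients vanish precisely because $\xi$ is a {\sc sopde} (this is where $\xi^i_\alpha = v^i_\alpha$, equivalently $J^\alpha(\xi_\alpha) = \mathbb{C}$, gets used — the $dv$-part of $\mathcal{L}_{\xi_\alpha}\theta^\alpha_L$ is $\frac{\partial^2 L}{\partial v^j_\beta \partial v^i_\alpha} \xi^i_\alpha\, dv^j_\beta$, which matches the $dv$-part of $dL$ iff $\xi^i_\alpha = v^i_\alpha$). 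Hence for a {\sc sopde}, $\mathcal{L}_{\xi_\alpha}\theta^\alpha_L = dL$ is equivalent to the scalar equations $\xi_\alpha(\partial L/\partial v^i_\alpha) = \partial L/\partial q^i$.

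Finally I would close the loop: expanding $\xi_\alpha(\partial L/\partial v^i_\alpha) = v^j_\alpha \frac{\partial^2 L}{\partial q^j \partial v^i_\alpha} + \xi^j_{\alpha\beta}\frac{\partial^2 L}{\partial v^j_\beta \partial v^i_\alpha}$, the condition $\xi_\alpha(\partial L/\partial v^i_\alpha) = \partial L/\partial q^i$ becomes exactly the displayed first equation of \eqref{ellnr} for a {\sc sopde} above, i.e.\ the first equation of \eqref{locel4}; and since $\xi$ is a {\sc sopde} the second equation $\xi^i_\alpha = v^i_\alpha$ of \eqref{locel4} holds trivially. By Lemma \ref{know1} (or directly by \eqref{ellnr}) this is equivalent to $i_{\xi_\alpha}\omega^\alpha_L = dE_L$, i.e.\ $\xi \in \mathfrak{X}^k_L(T^1_kQ)$. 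I expect the main obstacle to be purely bookkeeping: keeping the summation conventions straight (the repeated $\alpha$ in $\mathcal{L}_{\xi_\alpha}\theta^\alpha_L = dL$ is summed, whereas inside the coefficient $\theta^\alpha_L = (\partial L/\partial v^i_\alpha)dq^i$ it is a fixed label), and making sure the {\sc sopde} hypothesis is invoked at exactly the point where the $dv^i_\beta$-components are discarded — without regularity. A coordinate-free alternative would use $d_{J^\alpha}$-calculus and the identity $\mathcal{L}_{\xi_\alpha} d_{J^\alpha} L = d_{J^\alpha}\mathcal{L}_{\xi_\alpha} L + d_{[\mathcal{L}_{\xi_\alpha}, J^\alpha]}L$ together with the fact that $\mathcal{L}_{\xi_\alpha} J^\alpha$ (summed) behaves well on a {\sc sopde}, but the coordinate computation above is shorter and self-contained.
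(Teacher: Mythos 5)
Your proof is correct, and the second half of it (the coordinate computation of $\mathcal{L}_{\xi_\alpha}\theta^\alpha_L - dL$, with the $dv^i_\beta$-terms cancelling precisely because $\xi_\alpha(q^i)=v^i_\alpha$) is the same computation the paper uses to show the two forms of \eqref{eq} agree. Where you diverge is in linking $\mathcal{L}_{\xi_\alpha}\theta^\alpha_L=dL$ to $i_{\xi_\alpha}\omega^\alpha_L=dE_L$: you route both conditions through the local system \eqref{ellnr}/\eqref{locel4} and match coefficients, whereas the paper does this step intrinsically in one line, using $J^\alpha(\xi_\alpha)=\mathbb{C}$ to get $i_{\xi_\alpha}\theta^\alpha_L=\mathbb{C}L$ and then Cartan's formula to obtain the identity $\mathcal{L}_{\xi_\alpha}\theta^\alpha_L = dL + \left(dE_L - i_{\xi_\alpha}\omega^\alpha_L\right)$, from which the equivalence is immediate. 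The paper's version is shorter and coordinate-free for that step; yours has the merit of making explicit that no regularity is needed once the {\sc sopde} hypothesis kills the second block of \eqref{ellnr}, a point the paper leaves implicit. Two small blemishes in your write-up: the displayed formula with the ``??'' should simply be deleted in favour of the direct computation $\mathcal{L}_{\xi_\alpha}(dq^i)=d\xi^i_\alpha$, and your parenthetical claim that the $dv$-parts match \emph{iff} $\xi^i_\alpha=v^i_\alpha$ is not quite what your own formula gives (the correct $dv$-part is $\frac{\partial L}{\partial v^i_\alpha}\frac{\partial\xi^i_\alpha}{\partial v^j_\beta}\,dv^j_\beta$, and only the ``if'' direction is needed since the {\sc sopde} condition is a standing hypothesis); neither affects the validity of the argument.
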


\begin{proof}
We will start by proving that the first equation in \eqref{eq} is equivalent to
equation \eqref{genericEL}. Since $\xi$ is a {\sc sopde} we have that
$J^{\alpha}\xi_{\alpha}=\mathbb{C}$ and hence it follows that
$i_{\xi_{\alpha}}\theta^{\alpha}_L= \theta^\alpha_L(\xi_\alpha)=(dL\circ
J^{\alpha})(\xi_{\alpha})=\mathbb{C}L$.  Using the fact that
$\omega^\alpha_L=-d\theta^\alpha_L$ we obtain
$$\mathcal{L}_{\xi_\alpha} \theta^\alpha_L=di_{
  \xi_\alpha}\theta^\alpha_L+ i_{\xi_\alpha}d\theta^\alpha_L =
d(\mathbb{C}L) - i_{ \xi_\alpha}\omega^\alpha_L = dL +\left(dE_L -i_{\xi_\alpha}\omega^{\alpha}_L\right) .
$$
It follows that the first equation in \eqref{eq} is equivalent to
equation \eqref{genericEL}.

Since $\xi$ is a {\sc sopde} it follows that
$\xi_\alpha(q^i)=v^i_\alpha$.  Therefore, we have
\begin{eqnarray*}
\mathcal{L}_{\xi_\alpha}\theta^\alpha_L - dL &=& \mathcal{L}_{\xi_\alpha}
\left( \frac{\partial L}{\partial v^i_{\alpha}} dq^i\right) - dL
= \xi_{\alpha}\left( \frac{\partial L}{\partial v^i_{\alpha}} \right)
dq^i + \left( \frac{\partial L}{\partial v^i_{\alpha}} \right)
dv^i_{\alpha} - dL \\ &=& \left[\xi_{\alpha}\left( \frac{\partial
      L}{\partial v^i_{\alpha}} \right) - \frac{\partial L}{\partial
    q^i}\right] dq^i,
\end{eqnarray*}
and hence the two equations in \eqref{eq} are equivalent.  \end{proof}

We will discuss now the relation between solutions of the
Euler-Lagrange equations \eqref{eqn:eulerlagrange} or
\eqref{eqn:eulerlagrange2} and
integral sections of $k$-vector fields
in $\mathfrak{X}^k_L(T^1_kQ)$.
\begin{prop} \label{known2}  Consider a Lagrangian $L$ on $T^1_kQ$ and a $k$-vector
  field $\xi \in \mathfrak{X}^k_L(T^1_kQ)$.
\begin{enumerate}
\item[1)] If $\xi$ is a {\sc sopde}, then a map $\phi: U\subset {\mathbb
    R}^k \to  Q$ is a solution to the
  Euler-Lagrange equations \eqref{eqn:eulerlagrange} if and only if
\begin{eqnarray}
g^{\alpha\beta}_{ij}\circ \phi^{(1)} \left( \xi^j_{\alpha\beta}\circ
  \phi^{(1)} - \frac{\partial^2 \phi^j}{\partial t^{\alpha}\partial
      t^{\beta}}\right)=0. \label{xiel}
\end{eqnarray}
\item[2)] If the Lagrangian $L$ is regular, then $\xi$ is a {\sc sopde}, and
  if $\phi: U\subset {\mathbb R}^k \to  Q$ is a solution to $\xi$,
  then $\phi$ is a solution to the Euler-Lagrange equations \eqref{eqn:eulerlagrange}.
\item[3)] If $\xi$ is integrable,
and $\phi^{(1)}: U\subset {\mathbb R}^k \to   T^1_kQ$ is an integral section,
then $\phi: U\subset {\mathbb R}^k \to Q$ is a solution to the Euler-Lagrange
equations \eqref{eqn:eulerlagrange}.
\end{enumerate}
\end{prop}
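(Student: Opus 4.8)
The plan is to prove the three claims by exploiting the local characterization \eqref{locel4} together with the defining equations \eqref{ellnr}, and using the already-established fact (Lemma \ref{know1}) that a $k$-vector field in $\mathfrak{X}^k_L(T^1_kQ)$ which is a \textsc{sopde} is locally of the form \eqref{localsode1} with coefficients satisfying \eqref{locel4}.

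For part 1), assume $\xi$ is a \textsc{sopde}. Given any map $\phi$, evaluate the Euler--Lagrange expression \eqref{eqn:eulerlagrange} along $\phi^{(1)}$. Carrying out the $t^{\alpha}$-derivative by the chain rule and writing $\partial\phi^j/\partial t^{\alpha}$ for the fibre slots, one gets exactly the left-hand side of \eqref{eqn:eulerlagrange2} evaluated at $\phi^{(1)}$, namely
\begin{eqnarray*}
g^{\alpha\beta}_{ij}\circ\phi^{(1)}\,\frac{\partial^2\phi^j}{\partial t^{\alpha}\partial t^{\beta}} + \frac{\partial^2 L}{\partial q^j\partial v^i_{\alpha}}\circ\phi^{(1)}\,\frac{\partial\phi^j}{\partial t^{\alpha}} - \frac{\partial L}{\partial q^i}\circ\phi^{(1)}.
\end{eqnarray*}
On the other hand, \eqref{locel4} (which $\xi$ satisfies, by Lemma \ref{know1}(2)) evaluated at $\phi^{(1)}$ gives $g^{\alpha\beta}_{ij}\circ\phi^{(1)}\,\xi^j_{\alpha\beta}\circ\phi^{(1)} = \frac{\partial L}{\partial q^i}\circ\phi^{(1)} - \frac{\partial^2 L}{\partial q^j\partial v^i_{\alpha}}\circ\phi^{(1)}\,\frac{\partial\phi^j}{\partial t^{\alpha}}$, using that $\xi^i_\alpha=v^i_\alpha$ so that the $q^j$-slots of $\phi^{(1)}$ are $\phi^j$ and the $v^j_\alpha$-slots are $\partial\phi^j/\partial t^\alpha$. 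Subtracting, the Euler--Lagrange expression along $\phi^{(1)}$ equals $g^{\alpha\beta}_{ij}\circ\phi^{(1)}\big(\tfrac{\partial^2\phi^j}{\partial t^\alpha\partial t^\beta} - \xi^j_{\alpha\beta}\circ\phi^{(1)}\big)$, up to an overall sign, so that this quantity vanishes if and only if $\phi$ solves \eqref{eqn:eulerlagrange}. This is precisely \eqref{xiel}.

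For part 2), if $L$ is regular then $\xi$ is a \textsc{sopde} by Lemma \ref{know1}(1). A solution $\phi$ to $\xi$ means, by the preceding Definition, that $\phi^{(1)}$ is an integral section of $\xi$; by the Proposition characterizing integral maps of a \textsc{sopde} this forces $\partial^2\phi^j/\partial t^\alpha\partial t^\beta = \xi^j_{\alpha\beta}\circ\phi^{(1)}$, so the bracket in \eqref{xiel} vanishes identically, and part 1) then gives that $\phi$ solves the Euler--Lagrange equations. Part 3) is the same argument without assuming regularity: integrability of $\xi$ gives, again by that Proposition, that any integral section is of the form $\phi^{(1)}$ with $\partial^2\phi^j/\partial t^\alpha\partial t^\beta = \xi^j_{\alpha\beta}\circ\phi^{(1)}$; here $\xi$ being an element of $\mathfrak{X}^k_L(T^1_kQ)$ which is integrable is in particular a \textsc{sopde} (an integrable \textsc{sopde} is assumed, or one checks the integrability conditions \eqref{symcond} force the \textsc{sopde} form — in any case part 1) applies), so \eqref{xiel} holds and $\phi$ solves the Euler--Lagrange equations.

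The only delicate point is bookkeeping in part 1): one must be careful that when $\xi$ is a \textsc{sopde} the coefficient $\xi^i_\alpha$ equals $v^i_\alpha$, so that evaluating functions on $T^1_kQ$ along $\phi^{(1)}$ really does substitute $\phi^j$ into the $q^j$ arguments and $\partial\phi^j/\partial t^\gamma$ into the $v^j_\gamma$ arguments in a way consistent between \eqref{eqn:eulerlagrange2} and \eqref{locel4}; once this identification is made, the equivalence in \eqref{xiel} is immediate. I do not expect any genuine obstacle here — all the structural content has already been extracted in Lemma \ref{know1} and in the integral-section Proposition, and what remains is a direct comparison of two local expressions.
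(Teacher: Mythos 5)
Your parts 1) and 2) reproduce the paper's own argument essentially verbatim: evaluate the Euler--Lagrange expression along $\phi^{(1)}$, restrict \eqref{locel4} (equivalently \eqref{gxiphi0}) to the image of $\phi^{(1)}$, subtract to obtain \eqref{xiel}, and for 2) note that a solution to $\xi$ kills the bracket in \eqref{xiel}. No issue there.

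Part 3) contains a genuine gap. You assert that an integrable $\xi\in\mathfrak{X}^k_L(T^1_kQ)$ is ``in particular a {\sc sopde}'', justified either by fiat or by appeal to the integrability conditions \eqref{symcond}. Neither works: the hypothesis of part 3) does not assume $\xi$ is a {\sc sopde}, and the conditions \eqref{symcond} are necessary conditions for an \emph{already given} {\sc sopde} to be integrable --- they do not force a general integrable $k$-vector field to satisfy $\xi^i_\alpha=v^i_\alpha$. Without regularity, the second equation of \eqref{ellnr} only yields $g^{\alpha\beta}_{ij}\xi^i_\alpha=g^{\alpha\beta}_{ij}v^i_\alpha$, which does not imply $\xi^i_\alpha=v^i_\alpha$ (for $L=0$, say, $\mathfrak{X}^k_L(T^1_kQ)$ consists of all $k$-vector fields). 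Hence you cannot reduce part 3) to part 1), whose proof uses the {\sc sopde} property of $\xi$ as a field on all of $T^1_kQ$. The paper's argument sidesteps this: since the particular integral section in the hypothesis is a prolongation $\phi^{(1)}$, the integral-section condition gives $\xi^i_\alpha\circ\phi^{(1)}=\partial\phi^i/\partial t^\alpha$ and $\xi^i_{\alpha\beta}\circ\phi^{(1)}=\partial^2\phi^i/\partial t^\alpha\partial t^\beta$ \emph{along the image only}; restricting the first equation of \eqref{ellnr} (valid for any element of $\mathfrak{X}^k_L(T^1_kQ)$, {\sc sopde} or not) to that image, the skew term involving $\xi^j_\alpha$ combines with the right-hand side to produce exactly \eqref{eqn:eulerlagrange2} for $\phi$. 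Replace your reduction to part 1) by this direct restriction and part 3) is repaired.
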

\begin{proof}
1) Consider a map $\phi: U\subset {\mathbb
    R}^k \to  Q$. If $\phi$ is a solution to the
   Euler-Lagrange equations
  \eqref{eqn:eulerlagrange2},
  then we have
 \begin{eqnarray}
g^{\alpha\beta}_{ij}\circ \phi^{(1)} \frac{\partial^2 \phi^j}{\partial t^{\alpha}\partial
  t^{\beta}} + \frac{\partial^2L}{\partial q^j \partial v^i_\alpha}
\circ \phi^{(1)}  \frac{\partial \phi^j}{\partial t^\alpha }
- \frac{\partial L}{\partial q^i}\circ \phi^{(1)}=0.
\label{gxiphi1}\end{eqnarray}
If the $k$-vector field $\xi$ is a  {\sc sopde}, then $\xi \in
\mathfrak{X}^k_L(T^1_kQ)$  if and only if it satisfies the equations
\begin{eqnarray}
\frac{\partial^2 L}{\partial q^j \partial v^i_{\alpha}}\ v^j_{\alpha} +
\frac{\partial^2 L}{\partial v_\alpha^i \partial v^j_{\beta}}
\xi_{\alpha\beta}^j =
\frac{\partial  L}{\partial q^i}.
\label{gxiphi0}\end{eqnarray}
If we  restrict equation \eqref{gxiphi0} to the image of $\phi^{(1)}$ we
  obtain
\begin{eqnarray} g^{\alpha\beta}_{ij}\circ \phi^{(1)} \xi^j_{\alpha\beta}\circ
  \phi^{(1)} + \frac{\partial^2 L}{\partial q^j \partial v^i_{\alpha}}
  \circ \phi^{(1)} \frac{\partial \phi^j}{\partial t^{\alpha}} - \frac{\partial L}{\partial q^i}
  \circ \phi^{(1)}
  =0.\label{gxiphi}\end{eqnarray}
 Using equations \eqref{gxiphi} it follows that $\phi$ satisfies \eqref{xiel}
 if and only if it satisfies \eqref{gxiphi1} that are equivalent to
 Euler-Lagrange equations \eqref{eqn:eulerlagrange2}.

2) If $\phi: U\subset {\mathbb R}^k \to  Q$ is a solution to $\xi$
then it satisfies equations \eqref{nn1}. Therefore, equations
\eqref{xiel} are automatically satisfied and hence $\phi$ is a
solution of the Euler-Lagrange equations \eqref{eqn:eulerlagrange2}.

3) Since $\xi \in \mathfrak{X}^k_L(T^1_kQ)$ it follows that $\xi$
satisfies first equation \eqref{ellnr}. If we restrict this equation
to $\phi^{(1)}: U\subset {\mathbb R}^k \to   T^1_kQ$, which is an integral map
of $\xi$, we obtain that $\phi$ satisfies the Euler-Lagrange equations
\eqref{eqn:eulerlagrange2}. \end{proof}

\begin{rem}
The results of Lemma \ref{know1} and results $2)$ and $3)$ of Proposition \ref{known2} are the fundaments of Lagrangian $k$-symplectic formalism and equation \eqref{genericEL} can be seen as a geometric version of the Euler-Lagrange field equations.
\end{rem}

\begin{rem}
Formula \eqref{xiel} does not require any
relationship between the $k$-vector field $\xi \in
\mathfrak{X}^k_L(T^1_kQ)$ and the solution $\phi$ to the Euler-Lagrange equations
\eqref{eqn:eulerlagrange}.  In other words, we might
  have $\phi$ a solution to the Euler-Lagrange equations
  \eqref{eqn:eulerlagrange} which may not be a solution for any $\xi \in \mathfrak{X}^k_L(T^1_kQ)$
\end{rem}

\begin{ex} In this example we consider the theory of a
vibrating string. Coordinates $(t^1,t^2)$  are interpreted as the
time and the distance along the string, respectively. If
$
 \phi: (t^1,t^2)\in  \mathbb{R}^2  \longrightarrow  \phi(t^1,t^2)\in \mathbb{R}
$ denotes the displacement of each point of the
string as function of the time $t^1$ and the position $t^2$, the
motion  equation is
\begin{equation}\label{vs}
\sigma\frac{\displaystyle\partial^2 \phi}{\displaystyle\partial
(t^1)^2}-\tau \frac{\displaystyle\partial^2
\phi}{\displaystyle\partial (t^2)^2}=0,
\end {equation}
where $\sigma$ and $\tau$ are certain constants of the mechanical
system.

 Equation \eqref{vs} is the Euler-Lagrange equation
 for the regular Lagrangian
\begin{eqnarray} \label{lvs}
L: T^1_2 \mathbb{R} \to  \mathbb{R}, \
  L(q,v_1,v_2)=\frac{\displaystyle 1}{\displaystyle 2}(\sigma v^2_1-\tau v^2_2).\end{eqnarray}
From formulae \eqref{thetala} and \eqref{lvs} we deduce that
\begin{eqnarray} \label{locomel}
     \omega_L^1  =   \sigma dq\wedge dv_1, \quad  \omega_L^2 = -\tau dq \wedge dv_2, \quad
     dE_L=\sigma v_1dv_1-\tau v_2dv_2.
\end{eqnarray}
Therefore, a {\sc sopde} $(\xi_1, \xi_2)\in
\mathfrak{X}(T^1_2 \mathbb{R})$
$$
\xi_1=v_1\frac{\partial}{\partial q} +  \xi_{11} \frac{\partial}{\partial v_1} + \xi_{12}
\frac{\partial}{\partial v_2}, \quad
\xi_2=v_2\frac{\partial}{\partial q} + \xi_{12} \frac{\partial}{\partial
  v_1} +  \xi_{22} \frac{\partial}{\partial v_2},
$$
 is a solution to equation \eqref{genericEL}  if and only if it satisfies
 \begin{eqnarray}  \sigma \xi_{11} -\tau  \xi_{22}
=0.\label{geoeqej} \end{eqnarray}
The integrability conditions \eqref{symcond} are in this case
\begin{eqnarray}
\frac{\partial \xi_{11}}{\partial v_1} = \frac{\partial
  \xi_{12}}{\partial v_2},  \quad \sigma \frac{\partial
  \xi_{11}}{\partial v_2} = \tau \frac{\partial
  \xi_{12}}{\partial v_1}.  \label{diffsvs}
\end{eqnarray}
An example of an integrable {\sc sopde}, which is a solution to \eqref{geoeqej}
  is given by
\begin{eqnarray*}
\xi_1=v_1\frac{\partial}{\partial q} + \tau \left( \sigma (v_1)^2 + \tau
(v_2)^2 \right) \frac{\partial}{\partial v_1} + 2\sigma \tau v_1v_2
\frac{\partial}{\partial v_2}, \\
\xi_2=v_2\frac{\partial}{\partial q} + 2\sigma \tau v_1v_2
\frac{\partial}{\partial v_1} + \sigma \left( \sigma (v_1)^2 + \tau
(v_2)^2\right)  \frac{\partial}{\partial v_2}. \label{xivs}
\end{eqnarray*}
Thus any solution $\phi$ of the integrable {\sc sopde} $(\xi_1,
  \xi_2)$ in the formulae above is a solution of the vibrating string
  equation \eqref{vs}.
\end{ex}

\section{Noether's theorem}\label{nth}

In this section we discuss symmetries and conservation laws
 for Lagrangian functions on $T^1_kQ$. We introduce
the Newtonoid vector fields in this framework,
 extending the work of Marmo and Mukunda \cite{marmo86}
  for the case $k=1$. We provide a new proof for Noether's Theorem
  \ref{thm:noether} as well as some conditions under which its
  converse is true. Noether's Theorem \ref{thm:noether} was proved
  previously in \cite{rsv07} using
local coordinates. Here we present a direct global proof using the
Fr\"olicher-Nijenhuis formalism.

\subsection{Conservation laws and  Cartan symmetries}\label{clcs}

  For a regular Lagrangian on $TQ$, the corresponding
  Euler-Lagrange equations are equivalent to a {\sc sode}.
  This implies that its dynamical symmetries are equivalent to Cartan
  symmetries, which (locally) determine and are determined by
  constants of motions \cite[\S 13.8]{crampin}. For $k>1$ and a
  Lagrangian $L$ on $T^1_kQ$ none of the above equivalences are true
  anymore in the very general context. However, some relations remain
  true. In this subsection
  we discuss such relations between Cartan symmetries and conservation laws.

\begin{defn} \label{dynsym}   A map $f=(f^1 ,
    \ldots , f^k)\colon T^1_kQ \to \mathbb{R}^k$ is called a \emph{conservation law} (or a
\emph{conserved quantity}) for the Euler-Lagrange equations
\eqref{eqn:eulerlagrange} if the divergence of $$f\circ\phi^{(1)}=(f^1\circ\phi^{(1)},\ldots,f^k\circ\phi^{(1)})\colon
U\subset \mathbb{R}^k \rightarrow \mathbb{R}^k$$ is zero, for every
$\phi\colon U\subset  R^k\to M$ solution to the Euler-Lagrange equations \eqref{eqn:eulerlagrange}, that is
  \begin{eqnarray}\label{conlaw}
0=\frac{\partial (f^\alpha \circ \phi^{(1)})}{\partial
t^\alpha}\Big\vert_{t}= \frac{\partial f^{\alpha}}{\partial
q^i}\Big\vert_{\phi^{(1)}(t)}\frac{\partial \phi^i}{\partial
t^{\alpha}}\Big\vert_{ t} +\frac{\partial f^{\alpha}} {\partial
v^i_{\beta}}\Big\vert_{ \phi^{(1)}(t)} \frac{\partial^2
\phi^i}{\partial t^{\alpha}
\partial t^{\beta}}\Big\vert_{t}.
\end{eqnarray}
\end{defn}
Now, we present a simple  example of  conservation law.
\begin{ex}\label{exconsla}
The following two functions  $f^\alpha:T^1_2\mathbb{R}\to \mathbb{R}$,
$\alpha \in \{1,2\}$, where
\begin{eqnarray}  f^1(v_1, v_2)=-2\sigma v_1v_2, \quad f^2(v_1,v_2)=\sigma
  (v_1)^2 + \tau (v_2)^2,  \label{noncsym} \end{eqnarray}
give a conservation law for the Euler-Lagrange equation \eqref{vs}.
In fact, if $\phi$ is a solution to the Euler-Lagrange equations
\eqref{vs},  using \eqref{noncsym} we deduce that
  \begin{eqnarray*}
  \frac{\partial (f^1\circ \phi^{(1)})}{\partial t^1} + \frac{\partial (f^2\circ \phi^{(1)})}{\partial t^2}
=  0.
 \end{eqnarray*}
Hence the functions \eqref{noncsym} give a conservation law for
the Lagrangian \eqref{lvs}.
\end{ex}

\begin{lem}\label{imp} Let $f=(f^1,\ldots,f^k)\colon T^1_kQ\rightarrow
\mathbb{R}^k$ be a conservation law. Let $\xi=(\xi_1,\dots,\xi_k)$  be
an integrable {\sc sopde}   in $\mathfrak{X}^k_L(T^1_kQ)$,
  then
\begin{eqnarray} \label{xifa}
\xi_{\alpha}(f^\alpha)=0.
\end{eqnarray}
 \end{lem}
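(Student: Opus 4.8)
The plan is to evaluate the function $\xi_\alpha(f^\alpha)\in C^\infty(T^1_kQ)$ at an arbitrary point of $T^1_kQ$ and recognize the resulting value as the divergence of $f$ along a solution of the Euler--Lagrange equations. First I would fix $u\in T^1_kQ$. Since $\xi=(\xi_1,\dots,\xi_k)$ is an integrable {\sc sopde}, there is an integral section passing through $u$, and by the proposition characterizing the integral sections of a {\sc sopde} this section is of the form $\phi^{(1)}$ for some map $\phi\colon U\subset\mathbb{R}^k\to Q$ with $\phi^{(1)}(0)=u$, where $\phi$ satisfies the second-order system \eqref{nn1}. Because $\xi\in\mathfrak{X}^k_L(T^1_kQ)$ is integrable, part 3) of Proposition \ref{known2} then guarantees that $\phi$ is a solution of the Euler--Lagrange equations \eqref{eqn:eulerlagrange}.

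Next I would carry out the computation in induced local coordinates. Using the local form \eqref{localsode1} of a {\sc sopde}, $\xi_\alpha=v^i_\alpha\,\partial/\partial q^i+\xi^i_{\alpha\beta}\,\partial/\partial v^i_\beta$, so that
\[
\xi_\alpha(f^\alpha)=v^i_\alpha\,\frac{\partial f^\alpha}{\partial q^i}+\xi^i_{\alpha\beta}\,\frac{\partial f^\alpha}{\partial v^i_\beta}.
\]
Restricting this identity to the image of $\phi^{(1)}$ and using \eqref{solsopde} (equivalently \eqref{nn1}), namely $v^i_\alpha\circ\phi^{(1)}(t)=\partial\phi^i/\partial t^\alpha|_t$ and $\xi^i_{\alpha\beta}\circ\phi^{(1)}(t)=\partial^2\phi^i/\partial t^\alpha\partial t^\beta|_t$, I obtain
\[
\xi_\alpha(f^\alpha)\big|_{\phi^{(1)}(t)}=\frac{\partial f^\alpha}{\partial q^i}\Big|_{\phi^{(1)}(t)}\frac{\partial\phi^i}{\partial t^\alpha}\Big|_t+\frac{\partial f^\alpha}{\partial v^i_\beta}\Big|_{\phi^{(1)}(t)}\frac{\partial^2\phi^i}{\partial t^\alpha\partial t^\beta}\Big|_t,
\]
which is exactly the right-hand side of \eqref{conlaw}, that is, the divergence $\partial(f^\alpha\circ\phi^{(1)})/\partial t^\alpha|_t$ of $f\circ\phi^{(1)}$.

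Since $f$ is a conservation law and $\phi$ is a solution of the Euler--Lagrange equations, this divergence vanishes identically; in particular it vanishes at $t=0$, which gives $\xi_\alpha(f^\alpha)(u)=0$. As $u$ was an arbitrary point of $T^1_kQ$, we conclude that $\xi_\alpha(f^\alpha)\equiv 0$. I do not expect a genuine obstacle in this argument: the only points that need care are that the integral sections of an integrable {\sc sopde} really are prolongations $\phi^{(1)}$, and that such $\phi$ solve the Euler--Lagrange equations — both supplied by the results recalled above — together with the observation that the identity is established pointwise, using the integrability of $\xi$ to produce an integral section through every point of $T^1_kQ$.
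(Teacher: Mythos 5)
Your proof is correct and follows essentially the same route as the paper's: through any point one takes an integral section $\phi^{(1)}$ of the integrable {\sc sopde}, notes that $\phi$ solves the Euler--Lagrange equations because $\xi\in\mathfrak{X}^k_L(T^1_kQ)$, and identifies the divergence in \eqref{conlaw} at $t=0$ with $\xi_\alpha(f^\alpha)$ at that point via \eqref{solsopde}. No gaps.
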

\begin{proof}
 Since $\xi$  is integrable
we know that for every point $x\in T^1_kQ$ there exists an integral
section $\phi^{(1)}:U\subset \mathbb{R}^k  \rightarrow T^1_kQ$ such
that
\begin{enumerate}
\item[1)] $\phi$ is a solution to the Euler-Lagrange equations, because $\xi\in \mathfrak{X}^k_L(T^1_kQ)$,
\item[2)] $\phi$ satisfies
$$ \phi^{(1)}(0)=x, \
d_t\phi^{(1)}\left(\frac{\partial}{\partial
t^\alpha}\Big\vert_t\right)=\xi_{\alpha}(\phi^{(1)} (t))\in
T_{\phi^{(1)}(t)}\left(T^1_kQ\right),
$$
 for every $t\in U$ and for all $\alpha \in \{1, ..., k\}$.
\end{enumerate}
Condition 2) above means that
\begin{equation}\label{caz}
 v^i_\alpha(\phi^{(1)}(t))=\frac{\partial \phi^i} {\partial
t^\alpha}\Big\vert_{t}, \quad  \xi^i_{\alpha
\beta}(\phi^{(1)}(t)) =\frac{\partial^2 \phi^i}{\partial t^{\alpha}
\partial t^{\beta}}\Big\vert_{t}.
\end{equation} Since  $f=(f^1,\ldots,f^k)$ is a conservation
law, using formula \eqref{conlaw} at $t=0$, and using formulae
\eqref{caz}, we have
\begin{eqnarray*}
0 &=& \frac{\partial (f^\alpha \circ \phi^{(1)})}{\partial
t^\alpha}\Big\vert_{0}
 = \frac{\partial
f^{\alpha}}{\partial q^i}\Big\vert_{ \phi^{(1)}(0)} \frac{\partial
\phi^i}{\partial t^{\alpha}}\Big\vert_{ 0} +\frac{\partial
f^{\alpha}} {\partial v^i_{\beta}}\Big\vert_{ \phi^{(1)}(0)}
\frac{\partial^2 \phi^i}{\partial t^{\alpha}
\partial t^{\beta}}\Big\vert_{0} \\ & =  &
 \frac{\partial f^{\alpha}}{\partial q^i}\Big\vert_{ x}
\frac{\partial \phi^i}{\partial t^{\alpha}}\Big\vert_{ 0}
+\frac{\partial f^{\alpha}} {\partial v^i_{\beta}}\Big\vert_{ x}
\frac{\partial^2 \phi^i}{\partial t^{\alpha}
\partial t^{\beta}}\Big\vert_{0}= \frac{\partial f^{\alpha}}{\partial q^i}\Big\vert_{ x}
v^i_\alpha(x) +\frac{\partial f^{\alpha}} {\partial
v^i_{\beta}}\Big\vert_{ x} \xi^i_{\alpha\beta}(x) = \xi_{\alpha}(x)f^\alpha
 \end{eqnarray*}
\end{proof}

The converse of Lemma \ref{imp} may not be true, and the
reason is that,  as we can see from formula \eqref{xiel},  we might
have solutions $\phi$ of the Euler-Lagrange equations \eqref{eqn:eulerlagrange2}  that are not
solutions to some  $\xi \in \mathfrak{X}^k_L(T^1_kQ)$.

However, we will see in the following Lemma   that, under some
assumption on functions $f^{\alpha}$, this converse is true.

\begin{lem} \label{lem:12}
Let $L\in C^{\infty}(T^1_kQ)$ be a Lagrangian and
 assume that there exists  a vector field $X\in
\mathfrak{X}(T^1_kQ)$ such that
\begin{eqnarray}\label{ixoa}  i_X\omega_L^\alpha=df^\alpha, \ \forall \alpha \in
  \{1,...,k\},   \end{eqnarray}
  for some functions  $f^{\alpha}: T^1_kQ \to \mathbb{R}$ .

Then, $f^{\alpha}$ is a conservation law for the Euler-Lagrange equations
\eqref{eqn:eulerlagrange} if and only if
$\xi_{\alpha}(f^{\alpha})=0,$ for all integrable {\sc sopde}   $\xi \in \mathfrak{X}^k_L(T^1_kQ)$.
\end{lem}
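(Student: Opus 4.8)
The plan is to establish the two implications of the stated equivalence separately, using Lemma \ref{imp} for one direction and the hypothesis \eqref{ixoa} to recover a conservation law in the other. The forward implication is immediate: if $f^\alpha$ is a conservation law, then Lemma \ref{imp} applies verbatim to any integrable {\sc sopde} $\xi \in \mathfrak{X}^k_L(T^1_kQ)$ and gives $\xi_\alpha(f^\alpha)=0$. So the entire content of the lemma is the converse direction, and this is where the hypothesis \eqref{ixoa} is needed.

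For the converse, assume $\xi_\alpha(f^\alpha)=0$ for every integrable {\sc sopde} in $\mathfrak{X}^k_L(T^1_kQ)$; I must show \eqref{conlaw} holds along every solution $\phi$ of the Euler-Lagrange equations. The key idea is that \eqref{ixoa} forces the expression $\partial f^\alpha / \partial q^i \cdot (\partial\phi^i/\partial t^\alpha) + \partial f^\alpha/\partial v^i_\beta \cdot (\partial^2\phi^i/\partial t^\alpha\partial t^\beta)$ along $\phi^{(1)}$ to depend only on the first jet $\phi^{(1)}(t)$, not on the second derivatives of $\phi$ in a way that distinguishes arbitrary solutions from those coming from a {\sc sopde}. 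Concretely, I would write \eqref{ixoa} in local coordinates using \eqref{thetala}: if $X = X^i \partial/\partial q^i + X^i_\beta \partial/\partial v^i_\beta$, then $i_X\omega_L^\alpha = df^\alpha$ unpacks (after the standard computation, cf.\ the system \eqref{ellnr}) into relations expressing $\partial f^\alpha/\partial q^i$ and $\partial f^\alpha/\partial v^i_\beta$ in terms of $X^i$, $X^i_\beta$ and the Hessian $g^{\alpha\beta}_{ij}$. In particular $\partial f^\alpha/\partial v^i_\beta = g^{\alpha\beta}_{ij} X^j$ (from the $dq^i\wedge dv^j_\beta$ part of $\omega_L^\alpha$), which is the crucial observation: the coefficient of the second-derivative term in \eqref{conlaw} is $g^{\alpha\beta}_{ij}X^j$, and $\phi$ being Euler-Lagrange means $g^{\alpha\beta}_{ij}\circ\phi^{(1)}$ contracts the second derivatives against the remaining terms in a controlled way via \eqref{eqn:eulerlagrange2}.

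The cleanest route is therefore: take an arbitrary point $x\in T^1_kQ$ and an arbitrary Euler-Lagrange solution $\phi$ with $\phi^{(1)}(0)=x$; also pick an integrable {\sc sopde} $\xi\in\mathfrak{X}^k_L(T^1_kQ)$ (assume one exists, e.g.\ for regular $L$, or restrict attention to those $L$ for which the set is nonempty — this should be flagged). Then compute the divergence $\partial(f^\alpha\circ\phi^{(1)})/\partial t^\alpha$ at $0$ and compare it with $\xi_\alpha(x)f^\alpha$, which vanishes by hypothesis. The difference between these two expressions is exactly $\partial f^\alpha/\partial v^i_\beta|_x \cdot (\partial^2\phi^i/\partial t^\alpha\partial t^\beta|_0 - \xi^i_{\alpha\beta}(x))$, i.e.\ $g^{\alpha\beta}_{ij}|_x X^j(x)\,(\partial^2\phi^i/\partial t^\alpha\partial t^\beta|_0 - \xi^i_{\alpha\beta}(x))$. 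By Proposition \ref{known2}(1), applied to the Euler-Lagrange solution $\phi$ and the {\sc sopde} $\xi$, the quantity $g^{\alpha\beta}_{ij}\circ\phi^{(1)}(\xi^j_{\alpha\beta}\circ\phi^{(1)} - \partial^2\phi^j/\partial t^\alpha\partial t^\beta)$ vanishes; contracting this with $X^i$ (or rather observing that the index structure matches after symmetry of $g^{\alpha\beta}_{ij}$ in the appropriate slots) kills the difference term, so the divergence equals $\xi_\alpha(x)f^\alpha = 0$. Since $x$ and $\phi$ were arbitrary, $f=(f^1,\dots,f^k)$ is a conservation law.

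The main obstacle I anticipate is matching index contractions correctly: \eqref{xiel} in Proposition \ref{known2}(1) has the combination $g^{\alpha\beta}_{ij}(\xi^j_{\alpha\beta} - \partial^2\phi^j)$ with a free index $i$, whereas the difference term above carries $\partial f^\alpha/\partial v^i_\beta = g^{\alpha\beta}_{ij}X^j$ contracted against $(\partial^2\phi^i - \xi^i_{\alpha\beta})$ with $i$ summed — so one must use the symmetry $g^{\alpha\beta}_{ij}=g^{\beta\alpha}_{ji}$ (immediate from \eqref{gabij}) to rewrite $g^{\alpha\beta}_{ij}X^j(\partial^2\phi^i-\xi^i_{\alpha\beta}) = X^j\, g^{\beta\alpha}_{ji}(\partial^2\phi^i - \xi^i_{\alpha\beta})$ and then invoke \eqref{xiel} with free index $j$. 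A secondary point to be careful about is the standing assumption that an integrable {\sc sopde} in $\mathfrak{X}^k_L(T^1_kQ)$ exists; this should be stated explicitly (it holds for regular $L$ by Lemma \ref{know1} together with integrability, but in general it is an extra hypothesis), and if no such $\xi$ exists the statement is vacuous on that side.
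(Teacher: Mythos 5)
Your proposal is correct and follows essentially the same route as the paper's proof: the forward direction via Lemma \ref{imp}, and for the converse the extraction of $\partial f^{\alpha}/\partial v^j_{\beta}=g^{\alpha\beta}_{ij}X^i$ from the $dv^j_{\beta}$-component of \eqref{ixoa}, followed by contracting \eqref{xiel} with $X$ to identify the divergence of $f\circ\phi^{(1)}$ with $\xi_{\alpha}(f^{\alpha})\circ\phi^{(1)}$. The index-symmetry point and the implicit reliance on the existence of an integrable {\sc sopde} in $\mathfrak{X}^k_L(T^1_kQ)$ that you flag are both handled (the latter only tacitly) in the paper's argument as well.
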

\begin{proof}
 The direct implication is given by Lemma
\ref{imp}. Note that for this implication we do not
  need the assumption on the existence of the vector field $X$ that
  satisfies \eqref{ixoa}.

 For the converse implication consider  $X=X^i\partial/\partial q^i + X^i_{\alpha}\partial/\partial
v^i_{\alpha}$ a vector field on $T^1_kQ$  that
  satisfies \eqref{ixoa}.  In view of the second formula \eqref{thetala} we can
write equation \eqref{ixoa} as follows
\begin{eqnarray*}
\left[\left(\frac{\partial^2L}{\partial q^i\partial v^j_{\alpha}} -
    \frac{\partial^2L}{\partial q^j \partial v^i_{\alpha}}\right) X^j
  - g^{\alpha\beta}_{ij} X^j_{\beta}\right]dq^i +
g^{\alpha\beta}_{ij}X^i dv^j_{\beta} = \frac{\partial
  f^{\alpha}}{\partial q^i} dq^i +  \frac{\partial
  f^{\alpha}}{\partial v^j_{\beta}} dv^j_{\beta},
\end{eqnarray*}
and necessarily we have
\begin{eqnarray}
 \frac{\partial f^{\alpha}}{\partial v^j_{\beta}} =
 g^{\alpha\beta}_{ij} X^i. \label{pfg} \end{eqnarray}
Consider now $\phi$ any solution to the Euler-Lagrange equations
\eqref{eqn:eulerlagrange2} (which may not be a solution of any $\xi$). It
follows that $\phi$ satisfies equations \eqref{xiel}, since $\xi$ is
assumed to be an integrable {\sc sopde}.

If we
contract equations \eqref{xiel} by  $X^i\circ \phi^{(1)}$, we obtain
\begin{eqnarray}
(X^i\circ \phi^{(1)}) (g^{\alpha\beta}_{ij}\circ \phi^{(1)}) \left(
  \frac{\partial^2 \phi^j}{\partial t^{\alpha} \partial t^{\beta}} -
  \xi^j_{\alpha\beta}\circ \phi^{(1)}\right) = 0. \label{xgphi}
\end{eqnarray}

If we replace formula \eqref{pfg} in equation \eqref{xgphi} we obtain
\begin{eqnarray*}
0=\frac{\partial f^{\alpha}}{\partial
  v^j_{\beta}}\circ \phi^{(1)}\left(\frac{\partial^2 \phi^j}{\partial
    t^{\alpha}\partial t^{\beta}} - \xi^j_{\alpha\beta} \circ
  \phi^{(1)}\right) = \frac{\partial (f^{\alpha} \circ \phi^{(1)})}{\partial
  t^{\alpha}} - \xi_{\alpha}(f^{\alpha}) \circ \phi^{(1)} \, ,
\end{eqnarray*}
and this formula proves the result.
\end{proof}

Let us recall the definition of Cartan symmetry for a Lagrangian $L$, see \cite{rsv07}.
 \begin{defn}     A vector field $X\in
    \mathfrak{X}(T^1_kQ)$ is called a \emph{Cartan symmetry} of
     the  Lagrangian $L$, if
    $\mathcal{L}_X\omega^{\alpha}_L=0$ for all $\alpha \in
    \{1,...,k\}$ and $\mathcal{L}_XE_L=0$.
\end{defn}
In this case the flow $\phi_t$ of $X$ transforms solutions to the Euler-Lagrange equations on solutions to the Euler-Lagrange equations, that is, each $\phi_t$ is a symmetry of the Euler-Lagrange equations, see \cite{rsv07}.

From condition $\mathcal{L}_X\omega^{\alpha}_L=0$ one obtains that
there exists locally defined functions $f^{\alpha}$ such that
$i_X\omega^{\alpha}_L=df^\alpha$. Thus if $X$ is a Cartan symmetry
Lemma \ref{lem:12} holds locally.

\subsection{Newtonoid vector fields}\label{subnewt} In this subsection we study some
properties of the set of Newtonoid vector fields associated to a {\sc sopde},
generalizing the $k=1$ case, introduced by Marmo and Mukunda in
\cite{marmo86}.  Properties of the
  newtonoid vector fields associated to a {\sc sode} and their
  relations to symmetries and first order variation of geodesics were studied \cite{bcd11}.
We extend some of these properties to the case $k>1$. In Proposition
\ref{prop:csn} we prove that Cartan symmetries of a regular  Lagrangian
$L$ are Newtonoid vector fields for all corresponding {\sc sopde} $\xi \in
\mathfrak{X}^k_L(T^1_kQ)$.

We fix a {\sc sopde} $\xi$ and consider the following set of vector fields on $T^1_kQ$
\begin{eqnarray}
\mathfrak{X}_{\xi}=\operatorname{Ker}\left(J^{\alpha}\circ
\mathcal{L}_{\xi_{\alpha}}\right)\subset \mathfrak{X}(T^1_kQ). \label{xs1}
\end{eqnarray}
The set $\mathfrak{X}_{\xi}$ can be expressed locally as follows
\begin{eqnarray}
\mathfrak{X}_{\xi}=\left\{X\in \mathfrak{X}(T^1_kQ), \
X=X^i\frac{\partial}{\partial q^i} +
\xi_{\alpha}(X^i)\frac{\partial}{\partial v^i_{\alpha}}\right\}. \label{xs2}
\end{eqnarray}
Indeed, for a vector field $X\in \mathfrak{X}(T^1_kQ)$,
we have
\begin{eqnarray} \nonumber \left[\xi_\alpha, X\right] &=&
  \left[v_\alpha^i \frac{\partial}{\partial q^i} +
\xi^i_{\alpha\beta} \frac{\partial}{\partial v^i_{\beta}}, X^i \frac{\partial}{\partial q^i} +
X^i_{\alpha} \frac{\partial}{\partial v^i_{\alpha}}\right ] \\
&=& \left( \xi_\alpha(X^i) -X^i_{\alpha}\right)
\frac{\partial}{\partial q^i} + \left( \xi_\alpha(X^i_{\beta})
  -X(\xi^i_{\alpha\beta})\right)\frac{\partial}{\partial v^i_\beta}, \label{xiax}\end{eqnarray}
and therefore $J^{\alpha}\circ
\mathcal{L}_{\xi_{\alpha}}(X)=0$ if and only if  $\xi_\alpha(X^i) =X^i_\alpha.$
\begin{defn} Consider $\xi$ a {\sc sopde}. \begin{itemize}  \item[1)]  A vector field $X\in
      \mathfrak{X}_{\xi}$ is called a \emph{Newtonoid}  for
      $\xi$. \item[2)] A vector field $X\in \mathfrak{X}(T^1_kQ)$ is
      called a \emph{dynamical symmetry} of $\xi$ if $[\xi_{\alpha}, X]=0$, for all $\alpha \in
    \{1,...,k\}$. \end{itemize}
\end{defn}

From formula \eqref{xiax} it  follows that any dynamical symmetry for a
{\sc sopde} $\xi$ is a Newtonoid for $\xi$. For $k=1$, the set
$\mathfrak{X}_{\xi}$ was introduced in \cite{marmo86}
and it was called the set of Newtonoid vector fields.
In the next lemma we provide some
properties of the set of Newtonoid vector fields.

\begin{lem} \label{lem:pis}
For a {\sc sopde} $\xi $ consider the map $\pi_{\xi}:
\mathfrak{X}(T^1_kQ) \to \mathfrak{X}(T^1_kQ)$, given by  $\pi_{\xi} =
\operatorname{Id} + J^{\alpha}\circ \mathcal{L}_{\xi_{\alpha}}$. \begin{itemize}
\item[1)] The map $\pi_{\xi}$  satisfies $\pi_{\xi}\circ \pi_{\xi'}=\pi_{\xi}$,  for
  any two {\sc sopde}s $\xi$ and $\xi'$. In particular we have $\pi_{\xi}^2=\pi_{\xi}$
  and hence $\pi_{\xi}$ is
  a projector; \item[2)] $\operatorname{Im} \pi_{\xi}=\mathfrak{X}_{\xi}$,
  $ \operatorname{Ker} \pi_{\xi} = \mathfrak{X}^v(T^1_kQ)$, and hence the
  following sequence is exact $$ 0 \to \mathfrak{X}^v(T^1_kQ) \stackrel{i}{\to} \mathfrak{X}(T^1_kQ) \stackrel{\pi_{\xi}}{\to}
  \mathfrak{X}_{\xi} \to 0.$$ \item[3)] For $f\in C^{\infty}(T^1_kQ)$ and $X\in
\mathfrak{X}_{\xi}$, we define the product
\begin{eqnarray} f\ast X=\pi_{\xi}(fX) =
  fX+\xi_{\alpha}(f)J^{\alpha}X \in
\mathfrak{X}_{\xi}. \label{star} \end{eqnarray} The set
$\mathfrak{X}_{\xi}$ is a $C^{\infty}(T^1_kQ)$-module with respect to the $\ast$ product.
\item[4)] A vector field $X$ on $T^1_kQ$ is a Newtonoid
  for $\xi$ if and only if it has the local expression
\begin{eqnarray} X=X^i(q,v)\ast \frac{\partial}{\partial
    q^i}. \label{xstar} \end{eqnarray}
\end{itemize}
\end{lem}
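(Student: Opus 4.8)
The four statements are tightly linked, so I would organize the proof to build each part on the previous. Before touching any of them, I would record the single local computation that underlies everything: for a vector field $X = X^i\,\partial/\partial q^i + X^i_\alpha\,\partial/\partial v^i_\alpha$ on $T^1_kQ$, formula \eqref{xiax} gives $J^\alpha\circ\mathcal{L}_{\xi_\alpha}(X) = (\xi_\alpha(X^i) - X^i_\alpha)\,\partial/\partial v^i_\alpha$, because $J^\alpha$ kills the $\partial/\partial v$-components and sends $\partial/\partial q^i$ to $\partial/\partial v^i_\alpha$. Hence $\pi_\xi(X) = X^i\,\partial/\partial q^i + \xi_\alpha(X^i)\,\partial/\partial v^i_\alpha$ in local coordinates; this one formula immediately gives the characterization \eqref{xs2} of $\mathfrak{X}_\xi$ as $\operatorname{Ker}(J^\alpha\circ\mathcal{L}_{\xi_\alpha})$, the description of $\operatorname{Im}\pi_\xi$, and the fact that $\operatorname{Ker}\pi_\xi$ consists exactly of the vector fields with $X^i = 0$, i.e. $\mathfrak{X}^v(T^1_kQ)$.

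For part 1), the key observation is that $\pi_\xi(X)$ depends only on the $\partial/\partial q^i$-components $X^i$ of $X$ and on the particular {\sc sopde} one uses in the second term. Given two {\sc sopde}s $\xi,\xi'$, the vector field $\pi_{\xi'}(X)$ has the same $q$-components $X^i$ as $X$, so applying $\pi_\xi$ to it yields $X^i\,\partial/\partial q^i + \xi_\alpha(X^i)\,\partial/\partial v^i_\alpha = \pi_\xi(X)$. This proves $\pi_\xi\circ\pi_{\xi'} = \pi_\xi$; taking $\xi' = \xi$ gives idempotency. I would prefer to phrase this intrinsically if possible — noting that $d\tau_Q\circ\pi_\xi(X) = d\tau_Q(X)$ since the correction term $J^\alpha\mathcal{L}_{\xi_\alpha}(X)$ is vertical, and that on the vertical part $\pi_\xi$ acts as the relevant projection — but the coordinate argument is clean and I would fall back on it. Parts 2) is then essentially a restatement: exactness at $\mathfrak{X}^v(T^1_kQ)$ and at $\mathfrak{X}(T^1_kQ)$ is the kernel computation above, surjectivity onto $\mathfrak{X}_\xi$ follows because $\pi_\xi$ restricted to $\mathfrak{X}_\xi$ is the identity (a consequence of part 1 with $\xi' = \xi$ applied to $X\in\mathfrak{X}_\xi$, or directly: if $\xi_\alpha(X^i) = X^i_\alpha$ then $\pi_\xi(X) = X$).

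For part 3), I would first verify $f\ast X = \pi_\xi(fX) \in \mathfrak{X}_\xi$ by part 2, then compute $\pi_\xi(fX)$ in coordinates: its $q$-component is $fX^i$ and its $v$-component is $\xi_\alpha(fX^i) = f\xi_\alpha(X^i) + \xi_\alpha(f)X^i = fX^i_\alpha + \xi_\alpha(f)X^i$ (using $X\in\mathfrak{X}_\xi$), which matches $fX + \xi_\alpha(f)J^\alpha X$ since $J^\alpha X = X^i\,\partial/\partial v^i_\alpha$. The module axioms — additivity in $f$, compatibility $f\ast(g\ast X) = (fg)\ast X$, and $1\ast X = X$ — are routine verifications; the only one worth spelling out is associativity of scalar multiplication, which comes down to $\xi_\alpha(fg) = f\xi_\alpha(g) + g\xi_\alpha(f)$ combined with $J^\alpha$ being $C^\infty$-linear and $J^\alpha J^\beta = 0$ (so the double-correction terms collapse correctly). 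Part 4) is then immediate from part 3: if $X = X^i\,\partial/\partial q^i + \xi_\alpha(X^i)\,\partial/\partial v^i_\alpha \in \mathfrak{X}_\xi$, then by the formula in part 3, $X^i \ast \partial/\partial q^i = X^i\,\partial/\partial q^i + \xi_\alpha(X^i)J^\alpha(\partial/\partial q^j)$, and since $J^\alpha(\partial/\partial q^j) = \partial/\partial v^j_\alpha$ one recovers $X$; conversely any expression $X^i\ast\partial/\partial q^i$ lies in $\mathfrak{X}_\xi$ by part 3.

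I do not expect a genuine obstacle here — the lemma is essentially bookkeeping around the single identity $\pi_\xi(X)^i = X^i$, $\pi_\xi(X)^i_\alpha = \xi_\alpha(X^i)$. The one place demanding a little care is confirming that the $\ast$-product is well defined independent of how one writes $X$ (it is, since part 3's formula uses only $X$ itself and $f$, with $X$ already constrained to $\mathfrak{X}_\xi$), and checking associativity of the module action without sign or index errors in the $J^\alpha$-correction terms. I would present parts 1)–2) together via the coordinate formula, then 3)–4) as short consequences.
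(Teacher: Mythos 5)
Your proposal is correct and follows essentially the same route as the paper: everything is reduced to the local formula $\pi_{\xi}(X)=X^i\,\partial/\partial q^i+\xi_{\alpha}(X^i)\,\partial/\partial v^i_{\alpha}$, from which parts 2)--4) follow exactly as in the paper's proof. The only (harmless) stylistic difference is in part 1), where the paper argues intrinsically by expanding $\pi_{\xi}\circ\pi_{\xi'}$ and invoking the operator identity $J^{\alpha}\circ\mathcal{L}_{\xi_{\alpha}}\circ J^{\beta}=-J^{\beta}$, whereas you obtain the same conclusion by observing in coordinates that $\pi_{\xi}$ depends only on the $\partial/\partial q^i$-components, which $\pi_{\xi'}$ leaves unchanged.
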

\begin{proof}
Using the definition of the map $\pi_{\xi}$ it follows that
$\pi_{\xi}\circ \pi_{\xi'}=\operatorname{Id} + J^{\alpha}\circ
\mathcal{L}_{\xi_{\alpha}} + J^{\alpha}\circ
\mathcal{L}_{\xi'_{\alpha}}  +  J^{\alpha}\circ
\mathcal{L}_{\xi_{\alpha}} \circ J^{\beta}\circ
\mathcal{L}_{\xi'_{\beta}}.$ Now using the formula $ J^{\alpha}\circ
\mathcal{L}_{\xi_{\alpha}} \circ J^{\beta}=-J^{\beta}$ it follows that $\pi_{\xi}\circ \pi_{\xi'}=\pi_{\xi}$, which shows that
first part of the lemma is true.

Using formula \eqref{xiax}, we have
$$ \pi_{\xi} \left( X^i\frac{\partial}{\partial q^i} +
X^i_{\alpha}\frac{\partial}{\partial v^i_{\alpha}}\right) = X^i\frac{\partial}{\partial q^i} +
\xi_{\alpha}(X^i)\frac{\partial}{\partial v^i_{\alpha}},$$  which shows
that $\operatorname{Im} \pi_{\xi}=\mathfrak{X}_{\xi}$ and
$\operatorname{Ker} \pi_{\xi} = \mathfrak{X}^v(T^1_kQ)$.

With the $\ast$ product defined in formula \eqref{star}, the map
$\pi_{\xi}$ transfers the $C^{\infty}(T^1_kQ)$-module structure of
$\left(\mathfrak{X}(T^1_kQ), \cdot\right)$ to $\left(\mathfrak{X}_{\xi},
  \ast\right)$.

We have that $\operatorname{Im} \pi_{\xi}=\mathfrak{X}_{\xi}$. Therefore
$X\in \mathfrak{X}_{\xi}$ if and only if $X=\pi_{\xi}(X)$. Using the above
properties of map  $\pi_{\xi}$, a vector field $X$ on $T^1_kQ$ is locally
given as in the last part of formula  \eqref{xs2} if and only if it is
given by formula \eqref{xstar}.
\end{proof}

From formula \eqref{clift} it follows that the complete lift $Z^c\in
\mathfrak{X}(T^1_kQ)$ of a vector field $Z\in \mathfrak{X}(Q)$ is a
Newtonoid vector field for an arbitrary {\sc sopde} $\xi$. In the next
proposition we will see that the set of Newtonoid vector fields
contains also Cartan symmetries.

\begin{prop} \label{prop:csn}
Consider $L$ a regular Lagrangian on $T^1_kQ$ and $X\in \mathfrak{X}(T^1_kQ)$ a
Cartan symmetry of $L$. Then $X$ is a Newtonoid vector field  for every $\xi \in
\mathfrak{X}^k_L(T^1_kQ)$. \end{prop}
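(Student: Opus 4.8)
The plan is to differentiate the dynamical equation \eqref{genericEL} along the Cartan symmetry $X$. First, since $L$ is regular, Lemma \ref{know1} guarantees that every $\xi\in\mathfrak{X}^k_L(T^1_kQ)$ is a {\sc sopde}; this is what makes formula \eqref{xiax} applicable to $[\xi_\alpha,X]$ and, in particular, identifies the $\partial/\partial q^i$-component of $[\xi_\alpha,X]$ as $\xi_\alpha(X^i)-X^i_\alpha$. Recall also that such a $\xi$ satisfies $i_{\xi_\alpha}\omega_L^\alpha=dE_L$, with the implicit sum over $\alpha$.

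Next I would apply $\mathcal{L}_X$ to both sides of this identity. Using $\mathcal{L}_X\circ i_Y-i_Y\circ\mathcal{L}_X=i_{[X,Y]}$, the commutation $\mathcal{L}_X\circ d=d\circ\mathcal{L}_X$, and the two defining properties of a Cartan symmetry, $\mathcal{L}_X\omega_L^\alpha=0$ for all $\alpha$ and $\mathcal{L}_XE_L=0$, every term drops out except the bracket terms, leaving
\[
\sum_\alpha i_{[\xi_\alpha,X]}\,\omega_L^\alpha=0 .
\]
Then I would pass to local coordinates to read off what this says. Writing $X=X^i\,\partial/\partial q^i+X^i_\alpha\,\partial/\partial v^i_\alpha$, only the horizontal component $\xi_\alpha(X^i)-X^i_\alpha$ of $[\xi_\alpha,X]$ can produce terms in $dv^j_\beta$ when contracted with $\omega_L^\alpha$ (the vertical part contributes only $dq$'s, since $\omega_L^\alpha$ vanishes on $V\times V$), so collecting the $dv^j_\beta$-coefficients in the displayed identity gives
\[
g^{\alpha\beta}_{ij}\bigl(\xi_\alpha(X^i)-X^i_\alpha\bigr)=0,\qquad\text{for all }\beta,j .
\]
Since $L$ is regular, the Hessian $\bigl(g^{\alpha\beta}_{ij}\bigr)$ from \eqref{gabij} is an invertible $kn\times kn$ matrix, with rows indexed by $(\alpha,i)$ and columns by $(\beta,j)$, so $\xi_\alpha(X^i)=X^i_\alpha$ for all $\alpha,i$. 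By the local description \eqref{xs2} this is precisely the statement that $X\in\mathfrak{X}_\xi$, i.e. $X$ is a Newtonoid vector field for $\xi$.

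The argument is short once the strategy is fixed; the points needing care are the preliminary appeal to Lemma \ref{know1} (so that \eqref{xiax} is valid and the horizontal part of the bracket has the stated form) and, at the very end, the use of regularity of $L$ as invertibility of the \emph{full} $kn\times kn$ Hessian rather than of its individual blocks $g^{\alpha\beta}$ — this is the only place where regularity is genuinely needed. A coordinate-free variant is possible, observing that $Z:=\sum_\alpha J^\alpha[\xi_\alpha,X]$ is the vertical vector field that vanishes exactly when $X$ is Newtonoid and that $\sum_\alpha i_{[\xi_\alpha,X]}\omega_L^\alpha=0$ together with $\omega_L^\alpha|_{V\times V}=0$ forces the pairing of $Z$ against the nondegenerate Hessian to vanish; but the coordinate computation above is the most economical.
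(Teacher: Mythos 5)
Your proposal is correct and follows essentially the same route as the paper's proof: apply $\mathcal{L}_X$ to $i_{\xi_\alpha}\omega_L^\alpha=dE_L$, use the two Cartan-symmetry conditions to reduce to $i_{[\xi_\alpha,X]}\omega_L^\alpha=0$, then read off the $dv^j_\beta$-coefficient in local coordinates and invoke regularity of the full $kn\times kn$ Hessian to conclude $\xi_\alpha(X^i)=X^i_\alpha$. The paper carries out exactly this computation, including the same decomposition of $\omega_L^\alpha$ and the same use of maximal rank of $g^{\alpha\beta}_{ij}$.
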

\begin{proof}
Consider $X\in \mathfrak{X}(T^1_kQ)$ a Cartan symmetry of $L$ and $\xi
\in \mathfrak{X}^k_L(T^1_kQ)$. Since $L$ is regular it follows that
$\xi$ is a {\sc sopde}. Moreover, $\xi$ is a solution of the equation
$i_{\xi_{\alpha}}\omega^{\alpha}_L=dE_L$.  If we apply $\mathcal{L}_X$ to
both sides of this equation and use the commutation
  rules we obtain
\begin{eqnarray*} i_{\xi_{\alpha}}\mathcal{L}_X \omega^{\alpha}_L -
  i_{[\xi_{\alpha}, X]}\omega^{\alpha}_L=d\mathcal{L}_X E_L. \end{eqnarray*}
Using now the fact $\mathcal{L}_X \omega^{\alpha}_L=0$ and $\mathcal{L}_X E_L=0$ it
follows that
\begin{eqnarray}  i_{[\xi_{\alpha},
    X]}\omega^{\alpha}_L=0. \label{isaxoal}\end{eqnarray}
We will prove now that equation \eqref{isaxoal} implies that
$J^{\alpha}[\xi_{\alpha}, X]=0$ and hence $X$ is a Newtonoid vector
field for $\xi$.
Using formula \eqref{xiax}, we have
\begin{eqnarray} [\xi_{\alpha}, X]=V_{\alpha}^i\frac{\partial}{\partial q^i} +
V^i_{\alpha\beta} \frac{\partial}{\partial v^i_{\beta}}, \label{sax} \end{eqnarray}
where $V^i_{\alpha}=\xi_{\alpha}(X^i)-X^i_{\alpha}$ and
$V^i_{\alpha\beta}=\xi_{\alpha}(X^i_{\beta})-
X(\xi^i_{\alpha\beta})$.
Using formula \eqref{thetala}, it follows that the $k$-symplectic
$2$-forms $\omega^{\alpha}_L$ can be written as follows
\begin{eqnarray}
\omega^{\alpha}_L=a^{\alpha}_{ij} dq^i\wedge dq^j +
  g^{\alpha\beta}_{ij} dq^i \wedge dv^j_{\beta}, \label{oal} \end{eqnarray} where
$$ a^{\alpha}_{ij}= \frac{1}{2}\left( \frac{\partial^2 L}{\partial
    q^j \partial v^i_{\alpha}} - \frac{\partial^2 L}{\partial
    q^i \partial v^j_{\alpha}} \right), \quad g^{\alpha\beta}_{ij} =
\frac{\partial^2 L}{\partial v^i_{\alpha} \partial
  v^j_{\beta}}. $$
If we replace now formulae \eqref{sax} and \eqref{oal} in equation
\eqref{isaxoal} we obtain
$$ \left(2a^{\alpha}_{ij} V^j_{\alpha} -
  g^{\alpha\beta}_{ij}V^j_{\alpha\beta}\right) dq^i + g^{\alpha
  \beta}_{ij} V^i_{\alpha} dv^j_{\beta}=0, $$
which implies that $g^{\alpha \beta}_{ij} V^i_{\alpha}=0$. Using the fact that the Lagrangian $L$
is regular it follows that $g^{\alpha \beta}_{ij} $ has maximal rank
and hence $V^i_{\alpha}=\xi_{\alpha}(X^i)-X^i_{\alpha}=0$, which shows
that $X$ is a Newtonoid vector field for $\xi$.
\end{proof}

\subsection{Noether's Theorem} \label{sec:noether}

For the $k=1$ case it is well known that Cartan symmetries induce and
are induced by constants of motion, and these results are known as
Noether's Theorem and its converse. For $k>1$, Noether's Theorem is also
true, each Cartan symmetry induces a conservation law, see Theorem
\ref{thm:noether}. However, its converse may not be true. In
Proposition \ref{convnt} we discuss when this is the case.

The following theorem is proved in \cite[Thm 3.13]{rsv07} using local
coordinates. Here we give a direct proof of Noether's Theorem using the
Fr\"olicker-Nijenhuis formalism on
$T^1_kQ$. This proof will allow us to discuss also when the
converse of Noether's Theorem is true, for $k>1$. To show that there are
cases when the converse of Noether's Theorem is not true, we
  provide examples of conservation laws that are not induced by Cartan
  symmetries.

\begin{thm} (Noether's Theorem) \label{thm:noether} Consider  $L$ a
  Lagrangian on $T^1_kQ$ and $X\in \mathfrak{X}(T^1_kQ)$
  a Cartan symmetry for $L$.  Then, there  exists (locally defined) functions
  $g^{\alpha}$ on $T^1_kQ$ such that
  \begin{eqnarray}
    \mathcal{L}_X\theta^\alpha_L=dg^{\alpha} \label{lxdja} \end{eqnarray} and the following
  functions
\begin{eqnarray} f^{\alpha}=\theta^\alpha_L(X)-g^\alpha \label{conslaw}\end{eqnarray}
give a conservation law for the Euler-Lagrange equations.
\end{thm}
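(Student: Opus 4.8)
The plan is to first produce the local functions $g^{\alpha}$ of \eqref{lxdja}, then to observe that, although $g^{\alpha}$ and $f^{\alpha}=\theta^{\alpha}_{L}(X)-g^{\alpha}$ are only locally defined, the $1$-form $df^{\alpha}$ is globally defined and equals $i_{X}\omega^{\alpha}_{L}$, and finally to compute the divergence of $f\circ\phi^{(1)}$ along an arbitrary solution $\phi$ of the Euler--Lagrange equations and show it vanishes. For the first part I would argue that, since $\omega^{\alpha}_{L}=-d\theta^{\alpha}_{L}$ and $d$ commutes with $\mathcal{L}_{X}$, the hypothesis $\mathcal{L}_{X}\omega^{\alpha}_{L}=0$ gives $d(\mathcal{L}_{X}\theta^{\alpha}_{L})=\mathcal{L}_{X}(d\theta^{\alpha}_{L})=-\mathcal{L}_{X}\omega^{\alpha}_{L}=0$; hence $\mathcal{L}_{X}\theta^{\alpha}_{L}$ is closed and locally exact, $\mathcal{L}_{X}\theta^{\alpha}_{L}=dg^{\alpha}$, which is \eqref{lxdja}. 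Setting $f^{\alpha}=i_{X}\theta^{\alpha}_{L}-g^{\alpha}$ and using Cartan's formula $\mathcal{L}_{X}\theta^{\alpha}_{L}=d(i_{X}\theta^{\alpha}_{L})+i_{X}d\theta^{\alpha}_{L}$ together with $\mathcal{L}_{X}\theta^{\alpha}_{L}=dg^{\alpha}$ yields
\[ df^{\alpha}=d(i_{X}\theta^{\alpha}_{L})-dg^{\alpha}=-i_{X}d\theta^{\alpha}_{L}=i_{X}\omega^{\alpha}_{L}, \]
a globally well-defined $1$-form even though $f^{\alpha}$ itself is only local.

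For the second part, let $\phi\colon U\subset\mathbb{R}^{k}\to Q$ be any solution of the Euler--Lagrange equations \eqref{eqn:eulerlagrange}, and for $t\in U$ put $W_{\alpha}(t)=d_{t}\phi^{(1)}(\partial/\partial t^{\alpha})\in T_{\phi^{(1)}(t)}(T^{1}_{k}Q)$, which in coordinates reads $(\partial\phi^{i}/\partial t^{\alpha})\,\partial/\partial q^{i}+(\partial^{2}\phi^{i}/\partial t^{\alpha}\partial t^{\beta})\,\partial/\partial v^{i}_{\beta}$. By the chain rule and $df^{\alpha}=i_{X}\omega^{\alpha}_{L}$,
\[ \frac{\partial(f^{\alpha}\circ\phi^{(1)})}{\partial t^{\alpha}}\Big\vert_{t}=\sum_{\alpha}(df^{\alpha})_{\phi^{(1)}(t)}(W_{\alpha})=\sum_{\alpha}\omega^{\alpha}_{L}(X,W_{\alpha})\big\vert_{\phi^{(1)}(t)}. \]
The crucial ingredient is the identity $\sum_{\alpha}i_{W_{\alpha}}\omega^{\alpha}_{L}=(dE_{L})\circ\phi^{(1)}$, valid as an equality of $1$-forms along $\phi^{(1)}$ precisely because $\phi$ solves the Euler--Lagrange equations. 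This I would verify in coordinates: evaluating $\sum_{\alpha}\omega^{\alpha}_{L}(W_{\alpha},Y)-dE_{L}(Y)$ on an arbitrary $Y=Y^{i}\partial/\partial q^{i}+Y^{i}_{\beta}\partial/\partial v^{i}_{\beta}$, using the expression \eqref{oal} for $\omega^{\alpha}_{L}$ and the local form of $dE_{L}$, one obtains $Y^{i}$ times the left-hand side of \eqref{eqn:eulerlagrange2} evaluated along $\phi^{(1)}$, which is zero. Granting this, the antisymmetry of $\omega^{\alpha}_{L}$ and $\mathcal{L}_{X}E_{L}=X(E_{L})=0$ give
\[ \frac{\partial(f^{\alpha}\circ\phi^{(1)})}{\partial t^{\alpha}}\Big\vert_{t}=-\sum_{\alpha}\omega^{\alpha}_{L}(W_{\alpha},X)\big\vert_{\phi^{(1)}(t)}=-(dE_{L})(X)\big\vert_{\phi^{(1)}(t)}=-(\mathcal{L}_{X}E_{L})\big\vert_{\phi^{(1)}(t)}=0, \]
so that $f=(f^{1},\dots,f^{k})$ is a conservation law.

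The point requiring care is that $g^{\alpha}$, hence $f^{\alpha}$, are only locally defined, which is exactly why one must first establish that $df^{\alpha}=i_{X}\omega^{\alpha}_{L}$ globally, so that the divergence $\partial(f^{\alpha}\circ\phi^{(1)})/\partial t^{\alpha}$ is unambiguously defined. The other subtlety, and the step I expect to be the main obstacle, is that the conclusion must hold for \emph{every} solution $\phi$ of the Euler--Lagrange equations, not just those arising as solutions of some $\xi\in\mathfrak{X}^{k}_{L}(T^{1}_{k}Q)$ (cf.\ the Remark after Proposition \ref{known2}); this forces one to prove the auxiliary identity $\sum_{\alpha}i_{W_{\alpha}}\omega^{\alpha}_{L}=(dE_{L})\circ\phi^{(1)}$ directly for the prolongation $\phi^{(1)}$, rather than simply contracting $df^{\alpha}$ with a $k$-vector field in $\mathfrak{X}^{k}_{L}(T^{1}_{k}Q)$. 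No regularity assumption on $L$ is needed for this argument.
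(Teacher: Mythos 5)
Your proposal is correct, and while the first half (obtaining $g^{\alpha}$ from the closedness of $\mathcal{L}_X\theta^{\alpha}_L$ and deducing $i_X\omega^{\alpha}_L=df^{\alpha}$) coincides with the paper's argument, the second half takes a genuinely different route. The paper instead computes $\xi_{\alpha}(f^{\alpha})=-\mathcal{L}_X E_L=0$ for a {\sc sopde} $\xi\in\mathfrak{X}^k_L(T^1_kQ)$ via the Fr\"olicher--Nijenhuis calculus, and then invokes the converse direction of Lemma \ref{lem:12}, whose proof contracts equation \eqref{xiel} with the components of $X$ in local coordinates; that detour tacitly requires the existence of an integrable {\sc sopde} in $\mathfrak{X}^k_L(T^1_kQ)$ to serve as the auxiliary object. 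You work directly with an arbitrary Euler--Lagrange solution $\phi$ and the tangent vectors $W_{\alpha}=d\phi^{(1)}(\partial/\partial t^{\alpha})$, proving the identity $\sum_{\alpha}i_{W_{\alpha}}\omega^{\alpha}_L=dE_L$ along $\phi^{(1)}$ (which I have checked against \eqref{oal} and \eqref{eqn:eulerlagrange2}: the $dv^j_{\beta}$-components match because $v^i_{\alpha}\circ\phi^{(1)}=\partial\phi^i/\partial t^{\alpha}$, and the $dq^i$-components differ exactly by the Euler--Lagrange expression), after which antisymmetry of $\omega^{\alpha}_L$ and $X(E_L)=0$ finish the argument. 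What your route buys is the elimination of the auxiliary $k$-vector field altogether: you never need Lemma \ref{imp}, Lemma \ref{lem:12}, or any {\sc sopde}, and the statement genuinely holds for every Euler--Lagrange solution with no hypothesis on $\mathfrak{X}^k_L(T^1_kQ)$ being nonempty or integrable --- precisely the subtlety flagged in the Remark after Proposition \ref{known2}. What the paper's route buys is the reusable intermediate fact $\xi_{\alpha}(f^{\alpha})=0$ and the two-way Lemma \ref{lem:12}, which are then exploited again in Proposition \ref{convnt} for the converse of Noether's Theorem. Both proofs ultimately hide one coordinate computation (yours in the identity along $\phi^{(1)}$, the paper's inside Lemma \ref{lem:12}), so neither is more ``coordinate-free'' than the other.
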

 \begin{proof}
Since $X$ is a Cartan symmetry for $L$ it follows that
$\mathcal{L}_X\omega^{\alpha}_L=0$ and hence
the $1$-forms
$\mathcal{L}_X\theta^\alpha_L$ are closed. Locally, on $T^1_kQ$, one can find
function $g^{\alpha}$ such that
$\mathcal{L}_X\theta^{\alpha}_L=dg^{\alpha}$, thus
$$
i_Xd\theta^{\alpha}_L+di_X\theta^{\alpha}_L=dg^{\alpha},
$$
or equivalently
$$
i_X \omega_L^{\alpha}=d(\theta^\alpha_L(X)-g^\alpha).
$$
We will show now that functions
$f^{\alpha}=\theta^\alpha_L(X)-g^\alpha$, in formula \eqref{conslaw},
give a conservation law.  We will compute first $\xi_\alpha( f^{\alpha})$.

Using formula \eqref{conslaw} we have \begin{eqnarray} \nonumber \xi_{\alpha}(f^{\alpha}) &=& \mathcal{L}_{\xi_{\alpha}}
i_X\theta^{\alpha}_L-\xi_{\alpha}(g^{\alpha})  =
i_X\mathcal{L}_{\xi_{\alpha}}d_{J^{\alpha}}L + i_{[\xi_{\alpha}, X]}
d_{J^{\alpha}}L - \xi_{\alpha}(g^{\alpha}) \\
&=&   i_XdL +  i_{[\xi_{\alpha}, X]}
d_{J^{\alpha}}L  - \xi_{\alpha}(g^{\alpha}).  \label{safa1}\end{eqnarray}

We apply now $i_{\xi_{\alpha}}$ to both terms in formula
  $\mathcal{L}_Xd_{J^{\alpha}}L=dg^{\alpha}$, sum over $\alpha$, and
  obtain
\begin{eqnarray*}
\nonumber \xi_{\alpha}(g^{\alpha}) &=& i_{\xi_{\alpha}}dg^{\alpha} =i_{\xi_{\alpha}}
\mathcal{L}_Xd_{J^{\alpha}}L =
\mathcal{L}_X  i_{\xi_{\alpha}} d_{J^{\alpha}}L + i_{[\xi_{\alpha},
  X]}d_{J^{\alpha}}L \\
&=&  \mathcal{L}_X \mathbb{C}(L) +
 i_{[\xi_{\alpha}, X]} d_{J^{\alpha}}L .  \label{safa2}
\end{eqnarray*}
If we replace now, $\xi_{\alpha}(g^{\alpha})$, from
  the above formula in formula \eqref{safa1} we obtain $\xi_{\alpha}(f^{\alpha})
=-\mathcal{L}_X(E_L)=0$. Therefore we have:
$$\xi_{\alpha}(f^{\alpha})=0, \quad i_X\omega^{\alpha}_L=df^{\alpha}.$$
Now using Lemma \ref{lem:12} it follows that $f^{\alpha}$ is
a conservation law for $L$.
\end{proof}

We have seen that if $X$ is a Cartan symmetry for a
  Lagrangian $L$ on $T^1_kQ$ then the functions $f^{\alpha}\in
  C^{\infty}(T^1_kQ)$, which satisfy the equation
  $i_X\omega^{\alpha}_L=df^{\alpha}$, give a conservation law for
  $L$. We say that this conservation law $f^{\alpha}$ is induced by the Cartan
  symmetry $X$. For $k>1$ there are conservation laws that are not
  induced by Cartan symmetries. Next we provide such an example.

\begin{ex}\  \label{ex3}

\noindent $a)$
We have seen in Example \ref{exconsla} that the functions
$f^\alpha:T^1_2\mathbb{R}\to \mathbb{R}$, given by formula \eqref{noncsym},
give a conservation law for the Euler-Lagrange equations
\eqref{vs}. We will prove now that this conservation law is not
induced by a Cartan symmetry, and hence it will
  show that the converse of Noether's Theorem
\ref{thm:noether} is not true, unless the assumptions
\eqref{ixoa}  are satisfied.
 Consider $X\in \mathfrak{X}(T^1_2\mathbb{R})$, locally given by
$$
X= Z\frac{\partial}{\partial q}+Z_{1} \frac{\partial}{\partial v_{1}}
+Z_{2} \frac{\partial}{\partial v_{2}}
$$
Using formulae \eqref{locomel}, first equation
  \eqref{ixoa}, for $\alpha=1$, can be written as follows

$$i_X\omega^1_L=\sigma(Z dv_1 - Z_1 dq)=df^1=\frac{\partial f^1}{\partial q} dq+ \frac{\partial f^1}{\partial v_1} dv_1+\frac{\partial f^1}{\partial v_2} dv_2.$$

This implies that $\frac{\partial f^1}{\partial
    v_2}=0$, which is not true, since in our case $\frac{\partial
  f^1}{\partial v_2}=-2\sigma v_1$.

\noindent  $b)$ Consider the homogeneous isotropic 2-dimensional wave equation
\begin{eqnarray} u_{tt}-cu_{xx}-cu_{yy}=0.\label{we} \end{eqnarray}
Let us make the following notations
$t^1=t, t^2=x, t^3=y$ and $q=u$. The regular Lagrangian function $L\in
C^{\infty}(T^1_3\mathbb{R})$ for the wave
equation \eqref{we} is
\begin{eqnarray}
  L=\frac{1}{2}\left((v_1)^2-c(v_2)^2- c(v_3)^2\right). \label{lwe}\end{eqnarray}
Each of the following three sets of functions on $T^1_3\mathbb{R}$ will give a
conservation law for the Lagrangian $L$ in formula \eqref{lwe}:
$$\begin{array}{l}
\nonumber f^1(v_1,v_2,v_3) = (v_1)^2+c(v_2)^2+c(v_3)^2, \  f^2(v_1, v_2,
v_3)=-2cv_1v_2, \ f^3(v_1, v_2, v_3)=-2cv_1v_3; \\ \noalign{\medskip}
f^1(v_1, v_2, v_3)=2v_1v_2, \ f^2(v_1,v_2,v_3) =
-(v_1)^2-c(v_2)^2+c(v_3)^2, \  f^3(v_1, v_2, v_3)=-2cv_2v_3;  \\ \noalign{\medskip}
f^1(v_1, v_2, v_3)=2v_1v_3, \ f^2(v_1, v_2, v_3)=-2cv_2v_3, \  f^3(v_1,v_2,v_3) =
-(v_1)^2+c(v_2)^2-c(v_3)^2.
\end{array}$$
None of these conservation laws are induced by Cartan symmetries. \end{ex}

Theorem \ref{thm:noether} shows that any Cartan symmetry of  a
 Lagrangian  $L$ induces (locally defined) conservation
laws, for $k\geq 1$.

For the case $k=1$ the converse of this theorem
is also true: any conservation law of a
 Lagrangian  is induced by a Cartan symmetry.

In the case $k>1$ such result is not true
anymore, unless we require some extra assumptions.
 As we have already seen in Example \ref{ex3} there
  are examples of conservation
laws for some Lagrangians,  that are not induced by
any Cartan symmetries.

Part of the next proposition will show when conservation laws for a
 Lagrangian  are induced by Cartan symmetries.

\begin{prop} \label{convnt}
Consider $L\in C^{\infty}(T^1_kQ)$ a Lagrangian, functions
$f^1,  \ldots , f^k\in  C^{\infty}(T^1_kQ)$,  and a vector field $X\in
\mathfrak{X}(T^1_kQ)$ such that equations  \eqref{ixoa} are satisfied.
Then $f^{\alpha}$ is a conservation law for $L$ if and only if
 $X$ is Cartan symmetry.
\end{prop}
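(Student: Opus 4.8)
The plan is to deduce the statement from Lemma~\ref{lem:12} together with the computation that is already carried out inside the proof of Noether's Theorem~\ref{thm:noether}. The first remark is that hypothesis~\eqref{ixoa} by itself forces $\mathcal{L}_X\omega^\alpha_L=0$ for every $\alpha$: since the Poincar\'e-Cartan $2$-forms $\omega^\alpha_L=-d\theta^\alpha_L$ are closed, Cartan's formula gives $\mathcal{L}_X\omega^\alpha_L=d\,i_X\omega^\alpha_L+i_X\,d\omega^\alpha_L=d(df^\alpha)+0=0$. Hence, under the standing assumption~\eqref{ixoa}, the vector field $X$ is a Cartan symmetry \emph{if and only if} $\mathcal{L}_X E_L=0$, and the whole proposition reduces to showing that $f^\alpha$ is a conservation law for $L$ if and only if $\mathcal{L}_X E_L=0$.

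The bridge between these two conditions is the identity
\[
\xi_\alpha(f^\alpha)=-\mathcal{L}_X E_L
\]
(summation over $\alpha$), valid for every {\sc sopde} $\xi=(\xi_1,\dots,\xi_k)\in\mathfrak{X}^k_L(T^1_kQ)$. This is precisely the computation in the proof of Theorem~\ref{thm:noether} kept one step short of the end: there $X$ was a Cartan symmetry, so $\mathcal{L}_X E_L$ was set to $0$; here we keep it. Concretely I would use~\eqref{ixoa} and $\omega^\alpha_L=-d\theta^\alpha_L$ to rewrite it as $\mathcal{L}_X\theta^\alpha_L=d\bigl(\theta^\alpha_L(X)-f^\alpha\bigr)$, so that the (a priori only local) primitives $g^\alpha$ of Theorem~\ref{thm:noether} may here be taken to be the globally defined $g^\alpha:=\theta^\alpha_L(X)-f^\alpha$, i.e.\ $f^\alpha=i_X\theta^\alpha_L-g^\alpha$ with $\mathcal{L}_X\theta^\alpha_L=dg^\alpha$; then the same chain of interior-product/Lie-derivative commutations as in Theorem~\ref{thm:noether}, using $J^\alpha\xi_\alpha=\mathbb{C}$ (Proposition~\ref{xijso}) and $\mathcal{L}_{\xi_\alpha}\theta^\alpha_L=dL$ (Proposition~\ref{prop:elsopde}), makes the $i_{[\xi_\alpha,X]}d_{J^\alpha}L$ terms cancel and leaves $X(L)-\mathcal{L}_X\mathbb{C}(L)=-\mathcal{L}_X(\mathbb{C}(L)-L)=-\mathcal{L}_X E_L$.

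With this identity at hand both implications are immediate from Lemma~\ref{lem:12}, whose hypothesis is exactly our standing assumption~\eqref{ixoa}. If $X$ is a Cartan symmetry, then $\mathcal{L}_X E_L=0$, so $\xi_\alpha(f^\alpha)=0$ for every integrable {\sc sopde} $\xi\in\mathfrak{X}^k_L(T^1_kQ)$, and Lemma~\ref{lem:12} gives that $f^\alpha$ is a conservation law. Conversely, if $f^\alpha$ is a conservation law, Lemma~\ref{imp} gives $\xi_\alpha(f^\alpha)=0$ for every integrable {\sc sopde} $\xi\in\mathfrak{X}^k_L(T^1_kQ)$; the identity above then forces $\mathcal{L}_X E_L=0$, which together with the automatic equalities $\mathcal{L}_X\omega^\alpha_L=0$ means exactly that $X$ is a Cartan symmetry.

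The computation of the second paragraph is the only substantive step, and since it duplicates the one already in Theorem~\ref{thm:noether} I do not anticipate real difficulty there; the things to be careful about are the bookkeeping with the summation convention (the structural inputs being $\sum_\alpha J^\alpha\xi_\alpha=\mathbb{C}$ and $\sum_\alpha\mathcal{L}_{\xi_\alpha}\theta^\alpha_L=dL$, everything else standard Cartan calculus) and, of course, not discarding the $\mathcal{L}_X E_L$ term at the end. A minor caveat worth recording is that the converse implication passes through Lemma~\ref{lem:12}/Lemma~\ref{imp} and hence tacitly uses the existence of integrable {\sc sopde}s in $\mathfrak{X}^k_L(T^1_kQ)$.
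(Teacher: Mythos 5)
Your proposal is correct and follows essentially the same route as the paper's proof: observe that \eqref{ixoa} alone forces $\mathcal{L}_X\omega^{\alpha}_L=0$, relate $\mathcal{L}_XE_L$ to $\xi_{\alpha}(f^{\alpha})$, and conclude with Lemma \ref{lem:12}. The only (cosmetic) difference is how the bridging identity is obtained: the paper gets it in one line, $X(E_L)=i_Xi_{\xi_{\alpha}}\omega^{\alpha}_L=-i_{\xi_{\alpha}}df^{\alpha}=-\xi_{\alpha}(f^{\alpha})$, directly from \eqref{genericEL} and \eqref{ixoa}, whereas you re-run the longer Fr\"olicher--Nijenhuis computation from Theorem \ref{thm:noether}; both yield the same conclusion, and your closing caveat about needing at least one integrable {\sc sopde} in $\mathfrak{X}^k_L(T^1_kQ)$ applies equally to the paper's own argument.
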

\begin{proof}
In view of Lemma \ref{lem:12} we will have to prove that $X$ is a
Cartan symmetry if and only if $\xi_{\alpha}(f^{\alpha})=0$ for all
integrable {\sc sopde}  $\xi \in \mathfrak{X}^k_L(T^1_kQ)$.

Using formula \eqref{ixoa},  and the fact that $\xi \in \mathfrak{X}^k_L(T^1_kQ)$, we have
\begin{eqnarray}
X(E_L)=i_XdE_L=-i_Xi_{\xi_{\alpha}}\omega^{\alpha}_L=i_{\xi_{\alpha}}i_X\omega^{\alpha}_L=i_{\xi_{\alpha}}
df^{\alpha}=\xi_{\alpha}(f^{\alpha}). \label{xel}
\end{eqnarray}
From formula \eqref{ixoa} it follows that
$\mathcal{L}_X\omega^{\alpha}_L=0$. Therefore, $X$ is a Cartan
symmetry if and only if $X(E_L)=0$ and, in view of formula
\eqref{xel}, this is equivalent to $\xi_{\alpha}(f^{\alpha})=0$.
\end{proof}

For the case $k=1$, the regularity condition of the Lagrangian implies
that the Poincar\'e-Cartan $2$-form $\omega$ is a symplectic form and
hence equation \eqref{ixoa} always has a unique solution. In the case
$k>1$, for some given functions $f^{\alpha}\in C^{\infty}(T^1_kQ)$,
the system \eqref{ixoa} is overdetermined and it may not have
solutions $X\in \mathfrak{X}(T^1_kQ)$. We will provide examples when
this is the case.

\begin{ex}
  \begin{enumerate}

  \item[1)]
  Let us consider the following Lagrangians $L:T^1_2\mathbb{R} \to \mathbb{R}$:
  $$(a) \quad L(q,v_1,v_2)=\frac{\displaystyle 1}{\displaystyle 2}(\sigma
v^2_1-\tau v^2_2), \quad (b) \quad L(q,v_1,v_2)=\sqrt{1+(v_1)^2+(v_2)^2}.$$
The vector field   $X={\partial}/{\partial q}$ on $T^1_kQ$ is a Cartan
symmetry for both Lagrangians and the corresponding conservation laws  are
$$(a) \quad f^1=\sigma v_1,
  \quad f^2=-\tau v_2,\quad  (b) \quad f^1=\frac{v_1}{\sqrt{1+(v_1)^2+(v_2)^2}}, \quad f^2=\frac{v_2}{\sqrt{1+(v_1)^2+(v_2)^2}}.$$

  The above Lagrangians correspond to the vibrating string equations
  and the equation of minimal surfaces, respectively, see \cite{Mart2,Olver}.

 \item[2)]
  For the  Lagrangian $L : T^1_3\mathbb{R} \to \mathbb{R}$
defined by $$L(q,v_1,v_2,v_3)=\frac{1}{2}((v_1)^2+(v_2)^2+(v_3)^2),$$
 the vector field  $X={\partial}/{\partial q}$ is a Cartan symmetry, and the induced conservation law is
$$
f^1(v_1)=v_1, \quad f^2(v_2)=v_2, \quad f^3(v_3)=v_3.
$$
The Euler-Lagrange equations corresponding to $L$ are the Laplace equations.

\item[3)]  For the  Lagrangian $L: T^1_2 \mathbb{R}^2 \to  \mathbb{R}$ defined by
 $$
 L(q^1,q^2,v^1_1,v^1_2,v^2_1,v^2_2) = \left(\frac{1}{2}\lambda +\nu\right)
 [(v^1_1)^2+(v^2_2)^2] + \frac{1}{2}\nu[(v^1_2)^2+(v^2_1)^2]+(\lambda +\nu)v^1_1v^2_2,$$
the vector field $X= \partial/\partial q^1\, +\, \partial/\partial
q^2$  is again  a Cartan symmetry. The induced conservation law is
$$
f^1=(\lambda +2\nu)v^1_1+\nu v^2_1+(\lambda + \nu)v^2_2, \quad
f^2= (\lambda +\nu)v^1_1+\nu v^1_2+(\lambda + 2\nu) v^2_2   .
$$
 The Euler-Lagrange equations corresponding to $L$ are the Navier equations, see \cite{Mart2,Olver}
\end{enumerate}
\end{ex}

In Proposition \ref{prop:csn} we have seen that Cartan symmetries are
Newtonoid vector fields. Next theorem shows that under some
assumptions Newtonoid vector fields provide Cartan symmetries and
hence conservation laws. This theorem generalizes the result obtained
in the case $k=1$ by Marmo and Mukunda \cite{marmo86} for regular Lagrangians.

\begin{thm} \label{thm:cscl} Consider $L$ a   regular Lagrangian
    on $T^1_kQ$. We assume that there exists $X\in \mathfrak{X}(T^1_kQ)$ and
  $g^{\alpha}\in C^{\infty}(T^1_kQ)$ such that
\begin{eqnarray} \pi_{\xi}(X)(L)=\xi_{\alpha}(g^{\alpha}), \forall
  {\ \textrm  {\sc sopde} \ } \xi_{\alpha}. \label{marmo} \end{eqnarray}
Then, it follows: \begin{itemize} \item[1)] If $\xi\in
  \mathfrak{X}^k_L(T^1_kQ)$ we have that $\pi_{\xi}(X)$ is a Cartan
  symmetry for $L$. \item[2)] The functions $f^{\alpha}=\theta_L^\alpha(X)-g^{\alpha} $ give a
  conservation law for $L$. \end{itemize}
\end{thm}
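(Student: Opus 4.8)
The plan is to fix a {\sc sopde} $\xi\in\mathfrak{X}^k_L(T^1_kQ)$ --- which is a {\sc sopde} because $L$ is regular, by Lemma~\ref{know1} --- to set $\tilde X:=\pi_\xi(X)=X+J^\alpha[\xi_\alpha,X]\in\mathfrak{X}_\xi$, and to show that $\tilde X$ realises the given $f^\alpha=\theta^\alpha_L(X)-g^\alpha$ in the sense $i_{\tilde X}\omega^\alpha_L=df^\alpha$ and that these $f^\alpha$ satisfy $\xi_\alpha(f^\alpha)=0$ for every integrable {\sc sopde} in $\mathfrak{X}^k_L(T^1_kQ)$; via Lemma~\ref{lem:12} and Proposition~\ref{convnt} this will give both assertions at once. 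Two preliminary remarks. Since $J^\alpha\circ J^\beta=0$ we have $J^\alpha\tilde X=J^\alpha X$, hence $\theta^\alpha_L(\tilde X)=\theta^\alpha_L(X)$, so the candidate conservation law is unchanged if $X$ is replaced by $\tilde X$. And, from $\pi_\xi(X)=X+J^\alpha[\xi_\alpha,X]$ together with the identity $i_{[\xi_\alpha,X]}=\mathcal{L}_{\xi_\alpha}\circ i_X-i_X\circ\mathcal{L}_{\xi_\alpha}$, one gets for every {\sc sopde} $\xi$
\[
\pi_\xi(X)(L)=dL(X)+\theta^\alpha_L([\xi_\alpha,X])=dL(X)+\xi_\alpha\!\big(\theta^\alpha_L(X)\big)-i_X\big(\mathcal{L}_{\xi_\alpha}\theta^\alpha_L\big).
\]

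First I would dispose of the condition $\xi_\alpha(f^\alpha)=0$. If $\xi\in\mathfrak{X}^k_L(T^1_kQ)$ then Proposition~\ref{prop:elsopde} gives $\mathcal{L}_{\xi_\alpha}\theta^\alpha_L=dL$, so $i_X(\mathcal{L}_{\xi_\alpha}\theta^\alpha_L)=dL(X)$ and the displayed identity collapses to $\pi_\xi(X)(L)=\xi_\alpha(\theta^\alpha_L(X))$. Comparing with the hypothesis \eqref{marmo}, namely $\pi_\xi(X)(L)=\xi_\alpha(g^\alpha)$, and using that each $\xi_\alpha$ is a derivation, I obtain $\xi_\alpha(\theta^\alpha_L(X)-g^\alpha)=\xi_\alpha(f^\alpha)=0$ for every such $\xi$, in particular for every integrable one.

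Next I would establish $i_{\tilde X}\omega^\alpha_L=df^\alpha$ for each $\alpha$. Writing \eqref{marmo} in induced coordinates, i.e. $X^i\,\partial L/\partial q^i+\xi_\alpha(X^i)\,\partial L/\partial v^i_\alpha=\xi_\alpha(g^\alpha)$, and using that this holds for \emph{every} {\sc sopde} (so that the coefficients of the free functions $\xi^i_{\alpha\beta}$ may be matched), one extracts the two identities
\[
\frac{\partial X^i}{\partial v^j_\beta}\,\frac{\partial L}{\partial v^i_\alpha}=\frac{\partial g^\alpha}{\partial v^j_\beta},\qquad v^i_\alpha\,\frac{\partial g^\alpha}{\partial q^i}=X^i\,\frac{\partial L}{\partial q^i}+v^j_\alpha\,\frac{\partial X^i}{\partial q^j}\,\frac{\partial L}{\partial v^i_\alpha}.
\]
The first one yields $\partial f^\alpha/\partial v^j_\beta=g^{\alpha\beta}_{ij}X^i$, which by \eqref{oal} is exactly the $dv^j_\beta$-component of $i_{\tilde X}\omega^\alpha_L$ (recall $\tilde X=X^i\,\partial/\partial q^i+\xi_\alpha(X^i)\,\partial/\partial v^i_\alpha$, so its base components are those of $X$), and is moreover, by regularity of $L$, the compatibility condition for the equation $i_Y\omega^\alpha_L=df^\alpha$ to be solvable at all. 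It then remains to match the $dq^i$-components; here the $\xi$-dependent terms produced by $\xi_\alpha(X^i)$ must be reorganised by means of the equations \eqref{locel4} satisfied by $\xi$ and of the two identities above (and their $q$- and $v$-derivatives). I expect this last matching to be the main technical obstacle of the proof.

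Granting $i_{\tilde X}\omega^\alpha_L=df^\alpha$, the rest is formal. From $\omega^\alpha_L=-d\theta^\alpha_L$ and $\theta^\alpha_L(\tilde X)=\theta^\alpha_L(X)$ we get $\mathcal{L}_{\tilde X}\theta^\alpha_L=d(\theta^\alpha_L(\tilde X))-i_{\tilde X}\omega^\alpha_L=d(\theta^\alpha_L(X)-f^\alpha)=dg^\alpha$, hence $\mathcal{L}_{\tilde X}\omega^\alpha_L=0$; and, using $i_{\xi_\alpha}\omega^\alpha_L=dE_L$ (valid since $\xi\in\mathfrak{X}^k_L(T^1_kQ)$) together with the first step,
\[
\mathcal{L}_{\tilde X}E_L=i_{\tilde X}dE_L=i_{\tilde X}i_{\xi_\alpha}\omega^\alpha_L=-i_{\xi_\alpha}i_{\tilde X}\omega^\alpha_L=-i_{\xi_\alpha}df^\alpha=-\xi_\alpha(f^\alpha)=0.
\]
Thus $\tilde X=\pi_\xi(X)$ is a Cartan symmetry of $L$, which is assertion $1)$. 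For assertion $2)$, since $i_{\tilde X}\omega^\alpha_L=df^\alpha$ we may apply Lemma~\ref{lem:12} with the vector field $\tilde X$: the $f^\alpha$ give a conservation law for the Euler-Lagrange equations if and only if $\xi_\alpha(f^\alpha)=0$ for every integrable {\sc sopde} in $\mathfrak{X}^k_L(T^1_kQ)$, which is precisely what the first step established; alternatively one feeds $\mathcal{L}_{\tilde X}\theta^\alpha_L=dg^\alpha$ into the argument of Theorem~\ref{thm:noether}. Either way, $f^\alpha=\theta^\alpha_L(X)-g^\alpha$ is a conservation law for $L$.
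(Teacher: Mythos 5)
Your reductions are correct as far as they go: the intrinsic identity $\pi_\xi(X)(L)=X(L)+\theta^\alpha_L([\xi_\alpha,X])=\xi_\alpha(\theta^\alpha_L(X))$ for $\xi\in\mathfrak{X}^k_L(T^1_kQ)$ does give $\xi_\alpha(f^\alpha)=0$ cleanly; the extraction of $\partial f^\alpha/\partial v^j_\beta=g^{\alpha\beta}_{ij}X^i$ from \eqref{marmo} by matching the coefficients of the free functions $\xi^i_{\alpha\beta}$ is legitimate (the definition of {\sc sopde} imposes no symmetry on the $\xi^i_{\alpha\beta}$, so they really are independent); and the closing formal deductions would follow. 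But the proof is not complete. Everything hinges on the identity $i_{\pi_\xi(X)}\omega^\alpha_L=df^\alpha$, of which you verify only the $dv^j_\beta$-components, explicitly deferring the $dq^i$-components as ``the main technical obstacle''. That matching is precisely where the content of the theorem lies: without it you have neither $\mathcal{L}_{\pi_\xi(X)}\omega^\alpha_L=0$ (needed for assertion 1) nor the hypothesis \eqref{ixoa} of Lemma \ref{lem:12} (needed for assertion 2, since Lemma \ref{imp} alone gives only the direct implication, not the converse you are invoking). As written, this is a plan whose hardest step is left open.

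The paper closes exactly this step, and does so without coordinates. It sets $\eta^\alpha=\mathcal{L}_{\pi_\xi(X)}\theta^\alpha_L-dg^{\alpha}$ and proves $\eta^\alpha=0$ in two moves: first, $\eta^\alpha$ is semi-basic, which is obtained by evaluating the hypothesis \eqref{marmo} on the pair of {\sc sopde}s $\xi$ and $\xi'=\xi-V$ for arbitrary vertical fields $V_\alpha$, together with $\pi_{\xi}(X)-\pi_{\xi'}(X)=J^{\alpha}[V_{\alpha},\pi_{\xi}(X)]$ --- this is the intrinsic way of exploiting the ``for all {\sc sopde}'' quantifier that you exploit by coefficient matching; second, for $\xi\in\mathfrak{X}^k_L(T^1_kQ)$ one computes, using $\mathcal{L}_{\xi_\alpha}\theta^\alpha_L=dL$ and \eqref{marmo}, that $\mathcal{L}_{\xi_\alpha}\eta^\alpha=i_{[\xi_\alpha,\pi_\xi(X)]}\omega^\alpha_L$, which annihilates vertical vectors because $[\xi_\alpha,\pi_\xi(X)]$ is vertical and $\omega^\alpha_L$ vanishes on pairs of vertical vectors. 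Since $\eta^\alpha=\eta^\alpha_i\,dq^i$ forces $\mathcal{L}_{\xi_\alpha}\eta^\alpha=\xi_\alpha(\eta^\alpha_i)\,dq^i+\eta^\alpha_i\,dv^i_\alpha$, both facts together give $\eta^\alpha=0$, i.e. $\mathcal{L}_{\pi_\xi(X)}\theta^\alpha_L=dg^\alpha$ for each $\alpha$, which is exactly the statement $i_{\pi_\xi(X)}\omega^\alpha_L=d\bigl(\theta^\alpha_L(X)-g^\alpha\bigr)$ that you are missing. If you wish to keep your coordinate route you must actually carry out the $dq^i$-matching using the $q$- and $v$-derivatives of your two identities together with \eqref{locel4}; the argument via $\eta^\alpha$ is the efficient substitute.
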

\begin{proof}

$1)$ We have to prove that the following two conditions are satisfied
$$(a)\quad \mathcal{L}_{\pi_{\xi}(X)} \omega^{\alpha}_L=0, \quad (b)
\quad  \mathcal{L}_{\pi_{\xi}(X)} E_L=0.$$

$(a)$  For each $\alpha \in \{1,...,k\}$ we denote the $1$-forms
\begin{eqnarray}
\eta^{\alpha}=\mathcal{L}_{\pi_{\xi}(X)}\theta^\alpha_L-dg^{\alpha}. \label{eta} \end{eqnarray}
First we show that
$$
i_{V_\alpha}\eta^\alpha=0, \quad L_{\xi_\alpha}\eta^\alpha(V)=0
$$
for arbitrary vertical vector fields
$V, V_1, V_2, \ldots , V_k.$  First condition above is equivalent to
the fact  that $\eta^{\alpha}=\eta^{\alpha}_idq^i$ are semi-basic
$1$-forms. Moreover, using the fact that
$\mathcal{L}_{\xi_{\alpha}}\eta^{\alpha}=\xi_{\alpha}(\eta^{\alpha}_i)
dq^i + \eta^{\alpha}_i dv^i_{\alpha} $,  it follows that the second condition  will imply
$\eta^{\alpha}=0.$

Let  $\xi_{\alpha}\in \mathfrak{X}(T^1_kQ)$ be a {\sc sopde} and
$V_{\alpha}$ be vertical vector fields. It follows that $\xi'_{\alpha}=\xi_{\alpha} -
V_{\alpha}$ are also {\sc sopde}s. Using the fact that
$\theta^\alpha_L$ are semi-basic $1$-forms and $V_{\alpha}$ are
vertical vector  fields, we have that $i_{V_{\alpha}}
\theta^\alpha_L=0$. Therefore, making use of the corresponding
commutation rules, we have
\begin{eqnarray*}
i_{V_{\alpha}}\eta^{\alpha} & = & i_{V_{\alpha}}
\mathcal{L}_{\pi_{\xi}(X)}\theta^\alpha_L- i_{V_{\alpha}}  dg^{\alpha} = i_{V_{\alpha}}
\mathcal{L}_{\pi_{\xi}(X)}\theta^\alpha_L -
\mathcal{L}_{\pi_{\xi}(X)} i_{V_{\alpha}} \theta^\alpha_L-
i_{V_{\alpha}}  dg^{\alpha}  \\
& =&  i_{[V_{\alpha}, \pi_{\xi}(X)]} \theta^\alpha_L - i_{V_{\alpha}}
dg^{\alpha}  =  i_{J^{\alpha}[V_{\alpha}, \pi_{\xi}(X)]} dL - i_{V_{\alpha}}
dg^{\alpha} \\ & =&  i_{\pi_{\xi}(X)}dL -i_{\pi_{\xi'}(X)}dL -  i_{V_{\alpha}}
dg^{\alpha} = \xi_{\alpha}(g^{\alpha}) - \xi'_{\alpha}(g^{\alpha}) -
V_{\alpha}(g^{\alpha}) = 0.
\end{eqnarray*}
In the above calculations we did use the fact that
$\pi_{\xi}(X)-\pi_{\xi'}(X)=J^{\alpha}[V_{\alpha}, \pi_{\xi}(X)]$ and the fact
that the {\sc sopde}s $\xi$ and $\xi'$ satisfy the hypothesis
\eqref{marmo}.

We fix now $\xi\in \mathfrak{X}^k_L(T^1_kQ)$, and since $L$ is regular
this means that $\mathcal{L}_{\xi_{\alpha}}\theta^\alpha_L=dL$. Using
the notation \eqref{eta} we have
\begin{eqnarray*}
\mathcal{L}_{\xi_{\alpha}}\eta^{\alpha} & = & \mathcal{L}_{\xi_{\alpha}}
\mathcal{L}_{\pi_{\xi}(X)}\theta^\alpha_L-\mathcal{L}_{\xi_{\alpha}}
dg^{\alpha} = \mathcal{L}_{[\xi_{\alpha}, \pi_{\xi}(X)]}
\theta^\alpha_L + \mathcal{L}_{\pi_{\xi}(X)}  \mathcal{L}_{\xi_{\alpha}}
\theta^\alpha_L - \mathcal{L}_{\xi_{\alpha}} dg^{\alpha} \\ &=&
i_{[\xi_{\alpha}, \pi_{\xi}(X)]}\omega^{\alpha}_L + \mathcal{L}_{\pi_{\xi}(X)}
dL - d \xi_{\alpha}(g^{\alpha}) = i_{[\xi_{\alpha}, \pi_{\xi}(X)]}\omega^{\alpha}_L.
\end{eqnarray*}
Using the fact that $[\xi_{\alpha}, \pi_{\xi}(X)] $ is a vertical vector
fields, and the $k$-symplectic structure in formula \eqref{thetala}
vanishes on pairs of vertical vector fields, it follows that for an
arbitrary vertical vector field $V$ we have
$$ \mathcal{L}_{\xi_{\alpha}}\eta^{\alpha}(V) =
\omega^{\alpha}_L([\xi_{\alpha}, \pi_{\xi}(X)], V)=0. $$
Hence, we proved that $\eta^{\alpha}=0$, which means that
\begin{eqnarray}
\mathcal{L}_{\pi_{\xi}(X)}\theta^\alpha_L = dg^{\alpha}. \label{dga} \end{eqnarray}
 If we take the exterior derivative in the above formula it follows
 that $\mathcal{L}_{\pi_{\xi}(X)} \omega^{\alpha}_L=0$.

 $(b)$ In order to
 prove that $\pi_{\xi}(X)$ is a Cartan symmetry it remains to show that
 $\pi_{\xi}(X)(E_L)=0.$ For this we use the fact that
 $\mathbb{C}=J^{\alpha}(\xi_{\alpha})$ and hence
 $\mathbb{C}(L)=i_{\mathbb{C}}dL=i_{\xi_{\alpha}}\theta^\alpha_L$. Therefore,
\begin{eqnarray*}
\pi_{\xi}(X)(E_L) & =& \pi_{\xi}(X)(\mathbb{C}(L))-\pi_{\xi}(X)(L) =
\mathcal{L}_{\pi_{\xi}(X)} i_{\xi_{\alpha}}\theta^\alpha_L -
\mathcal{L}_{\pi_{\xi}(X)}L \\ &=&  i_{\xi_{\alpha}} \left(
  \mathcal{L}_{\pi_{\xi}(X)}\theta^\alpha_L - dg^{\alpha} \right) =
0. \end{eqnarray*}
$2)$ So far we have proved that $\pi_{\xi}(X)$ is a Cartan symmetry and it
satisfies formula \eqref{dga}. Using the fact $J^{\alpha} \circ
  \pi_{\xi}=J^{\alpha}$ and Noether's theorem \ref{thm:noether}, it
  follows that the functions $f^{\alpha}=\theta^\alpha_L(\pi_{\xi}(X))-g^{\alpha} = \theta^\alpha_L(X)-g^{\alpha}$ give a conservation
law for $L$.
\end{proof}

Theorem \ref{thm:cscl} extends the results in Corollary 3.15
  from \cite{rsv07}. Indeed if $X=Z^C$ for some $Z\in
\mathfrak{X}(M)$ and $g^{\alpha}\in C^{\infty}(M)$ the condition
\eqref{marmo} becomes $Z^c(L)=v^i_{\alpha}\partial g^{\alpha}/\partial
q^i$. It follows that $Z^c$ is a Cartan symmetry and the functions
$f^{\alpha}=Z^{v_{\alpha}}(L)-g^{\alpha}$ define a
conservation law.

\subsection*{Acknowledgments}
The work of IB was supported by the Romanian National
Authority for Scientific Research, CNCS UEFISCDI, project number
PN-II-ID-PCE-2012-4-0131.

We acknowledge the financial support of the Ministerio de Econom\'{\i}a y Competitividad (Spain), projects MTM2011-22585 and MTM2011-15725-E.

We express our thanks to the referees for their comments and suggestions.

\end{document}